\newcommand{\Binom}{\mathrm{Binomial}} 
\newcommand{\NegBinom}{\mathrm{NegativeBinomial}} 
\newcommand{\Normal}{\mathrm{Normal}} 
\newcommand{\Categ}{\mathrm{Categ}} 
\newcommand{\ra}[1]{\renewcommand{\arraystretch}{#1}} 
\newcommand{\ud}{\mathrm{d}} 
\newcommand{\uarg}{\,\cdot\,} 
\newcommand{\neff}{\mathrm{n}_{\mathrm{eff}}} 
\newcommand{\iact}{\mathrm{IACT}} 
\newcommand{\ire}{\mathrm{IRE}} 
\newcommand{\refr}[1]{{\dot{#1}}} 
\newcommand{\logit}{\mathrm{logit}} 
\newcommand{\alphatarget}{\alpha_*}
\newcommand{\alphaopt}{\alpha_{\mathrm{opt}}} 
\newcommand{\adaptstate}{\zeta} 
\newcommand{\adaptspace}{\mathsf{Z}} 
\newcommand{\adaptdata}{\xi} 
\newcommand{\X}{\mathsf{X}} 
\newcommand{\R}{\mathbb{R}} 
\newcommand{\cnQ}[1]{Q_{#1}^{\mathrm{AR}}} 
\newcommand{\rwQ}[1]{Q_{#1}^{\mathrm{RW}}} 
\newcommand{\rwq}[1]{q_{#1}^{\mathrm{RW}}} 
\newcommand{\rnaught}{\mathcal{R}_0} 
\newcommand{\popsize}{N_{\mathrm{pop}}} 
\newtheorem{theorem}{Theorem}
\newtheorem{lemma}[theorem]{Lemma}
\theoremstyle{remark}
\theoremstyle{definition}
\newtheorem{definition}{Definition}
\newtheorem{assumption}{Assumption}
\title{Conditional particle filters with diffuse initial distributions}
\author{Santeri Karppinen}
\author{Matti Vihola}
\begin{document}

\begin{abstract}
Conditional particle filters (CPFs) are powerful smoothing algorithms
for general nonlinear/non-Gaussian hidden Markov models. However, CPFs
can be inefficient or difficult to apply with diffuse initial
distributions, which are common in statistical applications. We
propose a simple but generally applicable auxiliary variable method,
which can be used together with the CPF in order to perform efficient
inference with diffuse initial distributions. The method only requires
simulatable Markov transitions that are reversible with respect to the
initial distribution, which can be improper. We focus in particular on
random-walk type transitions which are reversible with respect to a
uniform initial distribution (on some domain), and autoregressive
kernels for Gaussian initial distributions. We propose to use on-line
adaptations within the methods. In the case of random-walk transition,
our adaptations use the estimated covariance and acceptance rate
adaptation, and we detail their theoretical validity. We tested our
methods with a linear-Gaussian random-walk model, a stochastic
volatility model, and a stochastic epidemic compartment model with
time-varying transmission rate. The experimental findings demonstrate
that our method works reliably with little user specification, and can be
substantially better mixing than a direct particle Gibbs algorithm
that treats initial states as parameters.
\end{abstract}

\keywords{Bayesian inference, compartment model, conditional particle filter, hidden
Markov model, Markov chain Monte Carlo, smoothing, state-space model,
diffuse initialisation, adaptive Markov chain Monte Carlo}


\maketitle

\section{Introduction} 

In statistical applications of general state space hidden Markov
models (HMMs), commonly known also as state space models, it is often
desirable to initialise the latent state of the model with a diffuse
(uninformative) initial distribution
\citep[cf.][]{durbin-koopman2012}. We mean by `diffuse' the general
scenario, where the first marginal of the smoothing distribution is
highly concentrated relative to the prior of the latent
Markov chain, which may also be improper.

The conditional particle filter (CPF) \citep{pmcmc} and
in particular its backward sampling variants
\citep{backwardsampling,lindsten-jordan-schon} have been found to
provide efficient smoothing even with long data records, both
empirically \citep[e.g.][]{fearnhead2018particle} and theoretically
\citep{lee-singh-vihola}. However, a direct application of the CPF
to a model with a diffuse initial distribution will lead to poor
performance, because most of the initial particles will ultimately be redundant, as
they become drawn from highly unlikely regions of the state space.

There are a number of existing methods which can be used to mitigate
this inefficiency. For simpler settings, it is often relatively
straightforward to design proposal distributions that lead to
an equivalent model, which no longer has a diffuse
initial distribution. Indeed, if the first filtering distribution is
already informative, its analytical approximation may be used directly
as the first proposal distribution. The iteratively refined look-ahead
approach as suggested in \citep{guarniero-johansen-lee} extends to
more complicated settings, but can require careful tuning for
each class of problems.

We aim here for a general approach, which does not rely on any
problem-specific constructions. Such a general approach which allows
for diffuse initial conditions with particle Markov chain Monte Carlo
(MCMC) is to include the initial latent state of the HMM as a
`parameter'.  This was suggested in \citep{murray_disturbance} with
the particle marginal Metropolis-Hastings (PMMH).  The same approach
is directly applicable also with the CPF (using particle Gibbs); see
\cite{fearnhead_augmentation} who discuss general approaches based on
augmentation schemes.

Our approach may be seen as an instance of the general
`pseudo-observation' framework of \cite{fearnhead_augmentation}, but we
are unaware of earlier works about the specific class of methods we
focus on here. Indeed, instead of building the auxiliary variable from
the conjugacy perspective as in \cite{fearnhead_augmentation}, our
approach is based on Markov transitions that are reversible with
respect to the initial measure of the HMM. This approach may be
simpler to understand and implement in practice, and is very generally
applicable. We focus here on two concrete cases: the `diffuse
Gaussian` case, where the initial distribution is Gaussian with a relatively
uninformative covariance matrix, and the `fully diffuse` case, where
the initial distribution is uniform. We suggest on-line adaptation mechanisms
for the parameters, which make the methods easy to apply in practice.

We start in Section \ref{sec:aux} by describing the family of models
we are concerned with, and the general auxiliary variable
initialisation CPF that underlies all of our developments. We
present the practical methods in Section \ref{sec:diffuse-init}.
Section \ref{sec:experiments} reports
experiments of the methods with three academic models, and concludes
with a realistic inference task related to
modelling the COVID-19 epidemic in Finland. We conclude with
a discussion in Section \ref{sec:discussion}.

\section{The model and auxiliary variables}
\label{sec:aux} 

Our main interest is with HMMs having a joint smoothing distribution
$\pi$ of the following form:
\begin{equation}
  \label{eq:general_ssm}
  \pi(x_{1:T}) \propto p(x_1)p(y_1\mid x_1)\prod_{k=2}^{T} p(x_k \mid x_{k-1})p(y_k \mid x_k),
\end{equation}
where $\ell$:$u$ denotes the sequence of integers from $\ell$ to $u$
(inclusive), $x_{1:T}$ denotes the latent state variables, and
$y_{1:T}$ the observations. Additionally, $\pi$ may depend on
(hyper)parameters $\theta$, the dependence on which we omit for now,
but return to later, in Section \ref{sec:with-pg}.

For the convenience of notation, and to allow for some
generalisations, we focus on the Feynman-Kac form of the HMM smoothing
problem \cite[cf.][]{del-moral}, where the
distribution of interest $\pi$ is represented in terms of a
$\sigma$-finite measure $M_1(\ud x_1)$ on the state space $\X$,
Markov transitions $M_2,\ldots,M_T$ on $\X$ and
potential functions $G_k:\X^k\to[0,\infty)$ so that
\begin{equation}
  \label{eq:feynman_kac}
  \pi(\ud x_{1:T}) \propto M_1(\ud x_1)G_1(x_1) \prod_{k = 2}^{T}
  M_k(x_{k-1}, \ud x_k )G_k(x_{1:k}).
\end{equation}
The classical choice, the so-called `bootstrap filter' \citep{gordon-salmond-smith}, 
corresponds to
$M_1(\ud x_1) = p(x_1) \ud x_1$ and
$M_k(x_{k-1}, \ud x_k) = p(x_k\mid x_{k-1}) \ud x_k$, where `$\ud x$'
stands for the Lebesgue measure on $\X=\R^d$, and $G_k(x_{1:k}) =
p(y_k\mid x_k)$, but other choices with other `proposal distributions'
$M_k$ are also possible.
Our main focus is when $M_1$ is diffuse
with respect to the first marginal of $\pi$.
We stress that our method accomodates also improper $M_1$, such
as the uniform distribution on $\mathbb{R}^d$, as long as
\eqref{eq:feynman_kac} defines a probability.

The key ingredient of our method is an auxiliary Markov transition,
$Q$, which we can simulate from, and which satisfies the following:
\begin{assumption}[$M_1$-reversibility]
The Markov transition probability $Q$ is reversible with respect to the
$\sigma$-finite measure $M_1$, or $M_1$-reversible, if
\begin{equation}
  \label{eq:detailed_balance}
  \int M_1(\ud x_0)Q(x_0, \ud x_1) \mathbf{1}(x_0\in A,x_1\in B)
  = \int M_1(\ud x_1)Q(x_1, \ud x_0) \mathbf{1}(x_0\in A,x_1\in B),
\end{equation}
for all measurable $A,B\subset\X$.
\end{assumption}
We discuss practical ways to choose $Q$ in Section \ref{sec:diffuse-init}.
Assuming an $M_1$-reversible $Q$, we
define an augmented
target distribution, involving a new `pseudo-state' $x_0$ which is
connected to $x_1$ by $Q$:
\begin{equation*}
  \tilde{\pi}(\ud x_{0:T}) = \pi(\ud x_{1:T})Q(x_1, \ud x_0)
  \propto M_1(\ud x_0)Q(x_0, \ud x_1)
  G_1(x_1) \prod_{k = 2}^{T} M_k(x_{k-1}, \ud x_k)
                       G_k(x_{1:k}).
\end{equation*}
It is clear by construction that $\tilde{\pi}$ admits $\pi$ as its
marginal, and therefore, if we can sample $x_{0:T}$ from
$\tilde{\pi}$, then $x_{1:T}\sim \pi$.

Our method may be viewed as a particle Gibbs
\citep{pmcmc} which targets
$\tilde{\pi}$, regarding $x_0$ as the `parameter,' and $x_{1:T}$ the
`latent state,'
which are updated using the CPF. Algorithm \ref{alg:ai-cpf}
summarises the method, which we call the `auxiliary initialisation' CPF (AI-CPF).
Algorithm \ref{alg:ai-cpf} determines a $\pi$-invariant Markov transition
$\refr{x}_{1:T} \to \tilde{X}_{1:T}^{(B_{1:T})}$; the latter output of the
algorithm will be relevant later, when we discuss adaptation.

\begin{algorithm}
    \caption{$\text{AI-CPF}(\refr{x}_{1:T}; Q, M_{2:T}, G_{1:T}, N)$}
     \label{alg:ai-cpf} 
\begin{algorithmic}[1]
\State \label{dcpf:gibbs} Simulate $X_0 \sim Q(\refr{x}_1, \uarg)$.
\State \label{dcpf:newinit} Simulate $\tilde{X}_1^{(2:N)} \sim Q(X_0, \uarg)$ and set
$\tilde{X}_1^{(1)} = \refr{x}_1$.
\State \label{dcpf:f-cpf} Run
$(\tilde{X}_{1:T}^{(1:N)},W_{1:T}^{(1:N)},A_{1:T-1}^{(1:N)})
\gets \text{F-CPF}(\refr{x}_{2:T},
\tilde{X}_1^{(1:N)}; M_{2:T}, G_{1:T}, N) $.
\State \label{dcpf:cpf-select} 
$(B_{1:T}, V^{(1:N)}) \gets
\textsc{PickPath-x}(\tilde{X}_{1:T}^{(1:N)},W_{1:T}^{(1:N)},
A_{1:T-1}^{(1:N)}, M_{2:T},
G_{2:T})$.
\State \textbf{output} $\big(\tilde{X}_{1:T}^{(B_{1:T})}, (B_{1},
V^{(1:N)}, \tilde{X}_1^{(1:N)})\big),\quad$ where $\tilde{X}_{1:T}^{(B_{1:T})} =
    \big(\tilde{X}_{1}^{(B_1)},\tilde{X}_2^{(B_2)},\ldots,
        \tilde{X}_T^{(B_T)}\big)$.
\end{algorithmic}
\end{algorithm}

Line \ref{dcpf:gibbs}
of Algorithm \ref{alg:ai-cpf}
implements a Gibbs step sampling $X_0$ conditional on
$X_{1:T}=\refr{x}_{1:T}$, and lines
\ref{dcpf:newinit}--\ref{dcpf:cpf-select} implement together a CPF targetting
the conditional of $X_{1:T}$ given $X_0$. Line \ref{dcpf:f-cpf} runs
what we call a `forward' CPF, which is just a standard CPF conditional
  on the first state particles $X_{1}^{(1:N)}$,
detailed in Algorithm \ref{alg:f-cpf}. Line \ref{dcpf:cpf-select}
refers to a call of $\textsc{PickPath-AT}$ (Algorithm
\ref{alg:ancestortracing}) for
ancestor tracing as in the original work \citep{pmcmc},
or $\textsc{PickPath-BS}$ (Algorithm \ref{alg:backwardsampling}) for backward sampling
\citep{backwardsampling}. $\mathrm{Categ}(w^{(1:N)})$ stands for
the categorical distribution, that is, $A \sim \mathrm{Categ}(w^{(1:N)})$
if $\Pr(A=i) = w^{(i)}$.

\begin{algorithm}
  \caption{$\text{F-CPF}(\refr{x}_{2:T}, X_{1}^{(1:N)}; M_{2:T}, G_{1:T}, N)$}
  \label{alg:f-cpf} 
\begin{algorithmic}[1]
\State Set $\mathbf{X}_1^{(1:N)} \gets X_1^{(1:N)}$.
\For{$k = 1, \ldots, T - 1$}
\State Calculate $\tilde{W}_{k}^{(i)} \gets G_{k}(\mathbf{X}_{k}^{(i)})$
and $W_{k}^{(i)} \gets
\tilde{W}_k^{(i)}/\sum_{j=1}^N \tilde{W}_k^{(j)}$ for $i\in\{1{:}N\}$.
\State $A_{k}^{(2:N)} \sim \mathrm{Categ}\big(
W_{k}^{(1:N)}\big)$ and set $A_k^{(1)}\gets 1$.
\State Draw $X_{k+1}^{(i)} \sim M_{k+1}(\uarg \mid
            X_{k}^{(A_{k}^{(i)})})$ for
            $i \in \{2{:}N\}$
            and set $X_{k+1}^{(1)} = \refr{x}_{k+1}$.
\State Set $\mathbf{X}_{k+1}^{(i)} = (\mathbf{X}_k^{(A_{k}^{(i)})}, X_{k+1}^{(i)})$ for
$i \in \{1{:}N\}$.
\EndFor
\State Calculate $\tilde{W}_{T}^{(1:N)} \gets G_T(\mathbf{X}_T^{(1:N)})$
and $W_{T}^{(i)} \gets \tilde{W}_T^{(i)}/\sum_{j=1}^N \tilde{W}_T^{(j)}$
for $i=\{1{:}N\}$.
\State \textbf{output} $(X_{1:T}^{(1:N)}$, $W_{1:T}^{(1:N)}$,
$A_{1:T-1}^{(1:N)})$.
\end{algorithmic}
\end{algorithm} 
The ancestor tracing variant can be used when
the transition densities are unavailable. However, our main
interest here is with backward sampling, summarised in Algorithm
\ref{alg:backwardsampling} in the common case where the potentials only
depend on two consecutive states, that is, $G_k(x_{1:k}) =
G_k(x_{k-1:k})$, and the transitions admit densities $M_k(x_{k-1},\ud x_k) = M_k(x_{k-1},x_k)
\ud x_k$ with respect to some
dominating $\sigma$-finite measure `$\ud x_k$.'

\begin{algorithm}
  \caption{$\textsc{PickPath-AT}(\tilde{X}_{1:T}^{(1:N)},
    W_{1:T}^{(1:N)}, A_{1:T-1}^{(1:N)},
    M_{2:T}, G_{2:T})$}
  \label{alg:ancestortracing} 
  \begin{algorithmic}[1]
    \State Draw $B_K \sim \Categ\big(W_{T}^{(1:N)})$.
    \State \textbf{output} $(B_{1:T}, W_1^{(1:N)})$ where $B_k = A_k^{(B_{k+1})}$
      for $k=T-1,\ldots,1$.
  \end{algorithmic} 
\end{algorithm}

\begin{algorithm}
  \caption{$\textsc{PickPath-BS}(\tilde{X}_{1:T}^{(1:N)},
    W_{1:T}^{(1:N)}, A_{1:T-1}^{(1:N)},
    M_{2:T}, G_{2:T})$}
  \label{alg:backwardsampling} 
  \begin{algorithmic}[1]
    \State Draw $B_K \sim \Categ\big(W_{T}^{(1:N)})$.
    \For{$k = T - 1, \dots, 1$}
      \State Calculate $\tilde{V}_k^{(i)} \gets
        W_k^{(i)} M_{k + 1}(\tilde{X}_k^{(i)}, \tilde{X}_{k+1}^{(B_{k + 1})} )
        G_{k+1}(\tilde{X}_k^{(i)}, \tilde{X}_{k+1}^{(B_{k + 1})})$ for $i \in \{1{:}N\}$.
      \State Simulate $B_k \sim \Categ(V_k^{(1:N)})$, where $V_k^{(i)}
      = \tilde{V}_k^{(i)}/\sum_{j=1}^N \tilde{V}_k^{(j)}$.
    \EndFor
    \State \textbf{output} $(B_{1:T}, V_{1}^{(1:N)})$.
  \end{algorithmic} 
\end{algorithm}

We conclude with a brief discussion on the general method of Algorithm
\ref{alg:ai-cpf}.
\begin{enumerate}[(i)]
    \item We recognise that Algorithm \ref{alg:ai-cpf} is not
      new per se, in that it may be viewed just as a particle Gibbs
      applied for a specific auxiliary variable model. However, we are
      unaware of Algorithm \ref{alg:ai-cpf} being presented with the
      present focus: with an $M_1$-reversible $Q$, and allowing for
      an improper $M_1$.
    \item Algorithm \ref{alg:ai-cpf} may be viewed as a generalisation
      of the standard CPF. Indeed, taking $Q(x_0,\ud x_1) = M_1(\ud x_1)$
      in Algorithm \ref{alg:ai-cpf} leads to the standard CPF. Note
      that Line \ref{dcpf:gibbs} is redundant in this case, but is
      necessary in the general case.
    \item In the case $T=1$, Line \ref{dcpf:f-cpf} of
      Algorithm \ref{alg:ai-cpf} is redundant,
      and the algorithm resembles certain multiple-try Metropolis methods
      \cite[cf.][]{martino}, and has been
      suggested earlier \citep{mendes-scharth-kohn}.
    \item Algorithm \ref{alg:f-cpf} is formulated using multinomial
      resampling, for simplicity. We note that any other unbiased
      resampling may be used, as long as the conditional resampling is
      designed appropriately; see \cite{chopinsingh2015}.
\end{enumerate}
The `CPF generalisation' perspective of Algorithm \ref{alg:ai-cpf} may
lead to other useful developments; for instance, one could imagine the
approach to be useful with the CPF applied for static (non-HMM) targets,
as in sequential Monte Carlo samplers \citep{delmoral-doucet-jasra}. The
aim of the present paper is, however, to use Algorithm
\ref{alg:ai-cpf} with diffuse initial distributions.



\section{Methods for diffuse initialisation of conditional particle filters}
\label{sec:diffuse-init} 


To illustrate the typical problem that arises with a diffuse
initial distribution $M_1$, we examine a simple
noisy AR(1) model:
\begin{equation}
  \label{eq:noisyar}
  \begin{aligned}
    x_{k+1} &= \rho x_{k} + \eta_k,&  \eta_k \sim N(0, \sigma_x^2)\\
    y_{k} &= x_k + \epsilon_k, & \epsilon_k \sim N(0, \sigma_y^2),
  \end{aligned}
\end{equation}
for $k\ge 1$, $x_1 \sim N(0, \sigma_1^2)$, $M_1(\ud x_1) = p(x_1) \ud x_1$, 
$M_k(x_{k-1}, \ud x_k) = p(x_k\mid x_{k-1}) \ud x_k$ and $G_k(x_{1:k}) = p(y_k\mid x_k)$.

We simulated a dataset of length $T=50$ from this model
with $x_1 = 0$, $\rho = 0.8$ and $\sigma_x = \sigma_y = 0.5$.
We then ran 6000 iterations of the
CPF with backward sampling (CPF-BS) with $\sigma_1 \in \{10, 100, 1000\}$; that is,
Algorithm \ref{alg:ai-cpf} with $Q(x_0,\uarg) = M_1(\uarg)$ together with Algorithm
\ref{alg:backwardsampling},
and discarded the first 1000 iterations as burn-in.
For each value of $\sigma_1$, we monitored the efficiency of sampling $x_1$.
Figure \ref{fig:diffinit-poor-mixing-example} displays the resulting traceplots.
The estimated integrated autocorrelation times ($\iact$) were approximately
3.75, 28.92 and 136.64, leading to effective sample sizes ($\neff$) of
1600, 207 and 44, respectively.
This demonstrates how the performance of the CPF-BS deteriorates as the
initial distribution of the latent state becomes more diffuse.
\begin{figure}
  \centering
  \includegraphics[width=0.8\textwidth]{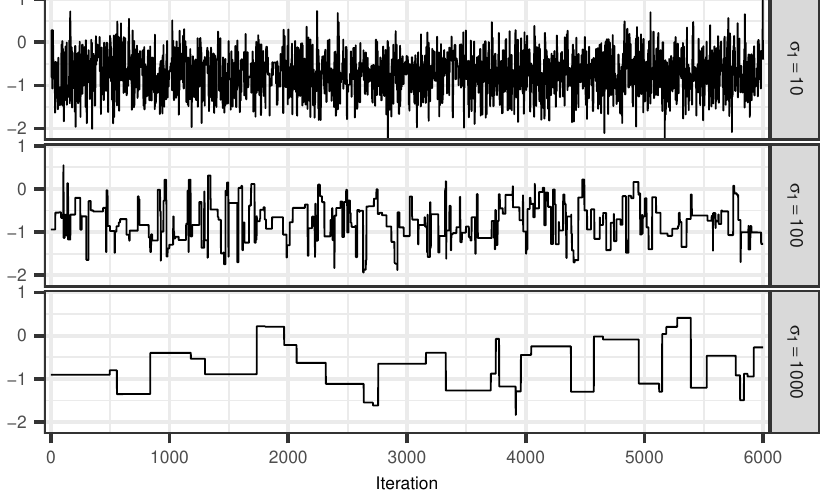}
  \caption{Traceplot of the initial state of the
    the noisy AR(1) model, using the CPF with 16 particles and
    backward sampling with $\sigma_1 = 10$ (top), $100$ (middle) and $1000$ (bottom).}
  \label{fig:diffinit-poor-mixing-example}
\end{figure}


\subsection{Diffuse Gaussian initialisation}
\label{sec:dgi-cpf} 
In the case that $M_1$ in \eqref{eq:feynman_kac} is Gaussian with
mean $\mu$ and covariance $\Sigma$,
we can construct a Markov transition function that satisfies
\eqref{eq:detailed_balance} using an autoregressive proposal similar to
`preconditioning' in the Crank-Nicolson algorithm
\citep[cf.][]{cotter-roberts-stuart-white}.
This proposal comes with a parameter $\beta \in (0, 1]$, so we denote this kernel
by $\cnQ{\beta}$.
A variate $Z \sim \cnQ{\beta}(x, \uarg)$ can be drawn simply by setting
\begin{equation}
  \label{eq:cn_sampling}
Z = \sqrt{1 - \beta^2}(x - \mu) + \beta W + \mu,
\end{equation}
where $W \sim N(0, \Sigma)$.
We refer to Algorithm \ref{alg:ai-cpf} with
$Q = \cnQ{\beta}$ as the diffuse Gaussian initialisation CPF (DGI-CPF).
In the special case $\beta = 1$, we have $\cnQ{1} = M_1$, and so
the DGI-CPF is equivalent with the standard CPF.


\subsection{Fully diffuse initialisation}
\label{sec:fdi-cpf} 
Suppose that $M_1(\ud x) = M_1(x) \ud x$ where $M_1(x)\equiv 1$ is a uniform
density on $\X=\R^d$. Then,
any symmetric transition $Q$ satisfies $M_1$-reversibility. In this case, we
suggest to use $\rwQ{C}(x,\ud y) = \rwq{C}(x,y)\ud y$ with a
multivariate normal density $\rwq{C}(x,y) = N(y; x, C)$, with
covariance $C\in\R^{d\times d}$. In case of
constraints, that is, a non-trivial domain $D\subset\R^d$, we have $M_1 =
1(x\in D)$. Then, we suggest to use a Metropolis-Hastings type
transition probability:
\[
    \rwQ{C}(x,\ud y) = \rwq{C}(x,y)
\min\bigg\{ 1, \frac{M_1(y)}{M_1(x)} \bigg\} \ud y + \delta_x(\ud y)
r(x),
\]
where $r(x)\in[0,1]$ is the rejection probability.
This method works, of course, with arbitrary $M_1$, but our focus is
with a diffuse case, where the domain $D$ is regular and large enough, so that
rejections are rare. We stress that also in this case, $M_1(x) = 1(x\in
D)$ may be improper.
We refer to Algorithm \ref{alg:ai-cpf} with $\rwQ{C}$ as the
`fully diffuse initialisation' CPF (FDI-CPF).

We note that whenever $M_1$ can be evaluated pointwise, the FDI-CPF can always
be applied, by considering the modified Feynman-Kac model
$\tilde{M}_1\equiv 1$ and $\tilde{G}_1(x) = M_1(x) G_1(x)$.
However, when $M_1$ is Gaussian, the DGI-CPF can often lead
to a more efficient method. As with standard random-walk Metropolis
algorithms, choosing the covariance $C\in\R^{d\times d}$ is important
for the efficiency of the FDI-CPF.


\subsection{Adaptive proposals}
\label{sec:adapt} 

Finding a good autoregressive parameter of $\cnQ{\beta}$ or the
covariance parameter of $\rwQ{C}$ may be time-consuming in practice.
Inspired by the recent advances in adaptive MCMC
\citep[cf.][]{andrieu-thoms,vihola-amcmc}, it is natural to apply
adaptation also with the (iterated) AI-CPF. Algorithm \ref{alg:aai-cpf}
summarises a generic adaptive AI-CPF (AAI-CPF) using a parameterised family
$\{Q_\adaptstate\}_{\adaptstate\in\adaptspace}$ of $M_1$-reversible
proposals, with parameter $\adaptstate$.
\begin{algorithm}
    \caption{$\text{AAI-CPF}(\refr{x}_{1:T}^{(0)}; Q_{\adaptstate^{(0)}}, M_{2:T}, G_{1:T}, N)$}
     \label{alg:aai-cpf} 
\begin{algorithmic}[1]
\For{$j = 1, \ldots, n$}
\State $(\refr{x}_{1:T}^{(j)}, \adaptdata^{(j)}) \gets
          \text{AI-CPF}(\refr{x}_{1:T}^{(j-1)}; Q_{\adaptstate^{(j-1)}}, M_{2:T}, G_{1:T}, N)$.
\State $\adaptstate^{(j)} \gets \textsc{Adapt}(\adaptstate^{(j-1)}, \adaptdata^{(j)}, j)$.
\EndFor
\State \textbf{output}
$(\refr{x}_{1:T}^{(1)},\ldots,\refr{x}_{1:T}^{(n)})$.
\end{algorithmic}
\end{algorithm} 
The function $\textsc{Adapt}$ implements the adaptation, which
typically leads to $\adaptstate^{(j)} \to \adaptstate^*$,
corresponding to a well-mixing configuration.
We refer to the instances of the AAI-CPF with the AI-CPF step corresponding to the
DGI-CPF and the FDI-CPF as the adaptive DGI-CPF and FDI-CPF, respectively.

We next focus on concrete adaptations which may be used within our framework.
In the case of the FDI-CPF, Algorithm \ref{alg:adapt-fdi-am} implements a stochastic
approximation variant \cite{andrieu-moulines} of the adaptive
Metropolis covariance adaptation \cite{haario-saksman-tamminen}.
\begin{algorithm}
  \caption{\textsc{Adapt}\textsubscript{FDI, AM}$\big((\mu$, $\Sigma$),
    $(B_{1}, W_1^{(1:N)}, X_1^{(1:N)}$), $j\big)$}
      \label{alg:adapt-fdi-am} 
  \begin{algorithmic}[1]
    \State $\mu_* \gets (1 - \eta_j)\mu + \eta_j X_1^{(B_1)}$.
    \State $\Sigma_* \gets (1 - \eta_j)\Sigma + \eta_j
          (X_1^{(B_1)} - \mu)(X_1^{(B_1)} - \mu)^{\mathrm{T}}$.
    \State \textbf{output} $(\mu_*, \Sigma_*)$.
  \end{algorithmic}
\end{algorithm} 
Here, $\eta_j$ are step sizes that decay to zero, $\adaptstate_j =
(\mu_j,\Sigma_j)$ the estimated mean and covariance of the smoothing
distribution, respectively, and
$Q_\adaptstate = \rwQ{c \Sigma}$ where $c>0$ is a scaling factor of the
covariance $\Sigma$.
In the case of random-walk Metropolis, this scaling factor is usually
taken as $2.38^2/d$ \cite{gelman-roberts-gilks}, where $d$ is the state dimension of the model.
In the present context, however,
the optimal value of $c > 0$ appears to depend on the model and
on the number of particles $N$.
This adaptation mechanism can be used both with
\textsc{PickPath-AT} and with \textsc{PickPath-BS}, but may require some
manual tuning to find a suitable $c>0$.

Algorithm \ref{alg:adapt-fdi-aswam} details another adaptation for the FDI-CPF,
which is intended to be used together with \textsc{PickPath-BS} only. Here,
$\adaptstate_j = (\mu_j,\Sigma_j, \delta_j)$ contains the estimated mean,
covariance and the scaling factor, and
$Q_\adaptstate =
\rwQ{C(\adaptstate)}$, where $C(\adaptstate) =  e^\delta \Sigma$.

\begin{algorithm}
  \caption{\textsc{Adapt}\textsubscript{FDI, ASWAM}$\big((\mu$, $\Sigma$, $\delta$),
        ($B_{1}, W_{1}^{(1:N)}, X_{1}^{(1:N)}$), $j\big)$}
  \label{alg:adapt-fdi-aswam}
  \begin{algorithmic}[1]
    \State $\mu_* \gets (1 - \eta_j)\mu + \eta_j\sum_{i = 1}^{N}
    W_1^{(i)}X_1^{(i)}$.
    \State $\Sigma_* \gets (1 - \eta_j)\Sigma + \eta_j\sum_{i = 1}^{N}
    W_1^{(i)}
          (X_1^{(i)} - \mu)(X_1^{(i)} - \mu)^{\mathrm{T}}$.
    \State $\delta_* \gets \delta + \eta_j(\alpha - \alphatarget)$
    where $\alpha = 1 - W_1^{(1)}$.
    \State \textbf{output} $(\mu_*, \Sigma_*, \delta_*)$.
  \end{algorithmic}
\end{algorithm} 
%
\begin{algorithm}
  \caption{\textsc{Adapt}\textsubscript{DGI, AS}$\big(\adaptstate$,
        $(B_{1}, W_{1}^{(1:N)}, X_{1}^{(1:N)}$), $j\big)$}
      \label{alg:adapt-dgi} 
  \begin{algorithmic}[1]
    \State $\adaptstate_* \gets \adaptstate + \eta_j(\alpha - \alphatarget)$
    where $\alpha = 1 - W_1^{(1)}$.
    \State \textbf{output} $\adaptstate_*$.
  \end{algorithmic}
\end{algorithm} 
This algorithm is inspired by a Rao-Blackwellised variant of the
adaptive Metropolis within adaptive
scaling method \citep[cf.][]{andrieu-thoms}, which is applied with
standard random-walk Metropolis. We use
all particles with their backward sampling weights to update the
mean $\mu$ and covariance $\Sigma$, and an `acceptance rate' $\alpha$,
that is, the probability that the first coordinate of the reference
trajectory is not chosen.
Recall that after the AI--CPF in Algorithm \ref{alg:aai-cpf} has been run,
the first coordinate of the reference trajectory and its associated weight reside
in the first index of the particle and weight vectors contained in $\adaptdata^{(j)}$.

The optimal value of the acceptance rate
parameter $\alphatarget$ is
typically close to one, in contrast with random-walk
Metropolis, where $\alphatarget \in [0.234,0.44]$ are common
\citep{gelman-roberts-gilks}. Even though the optimal value appears to
be problem-dependent, we have found empirically that $0.7\le
\alphatarget \le 0.9$ often leads to reasonable mixing.
We will show empirical evidence for this finding in Section \ref{sec:experiments}.

Algorithm \ref{alg:adapt-dgi} describes a similar adaptive scaling type
mechanism for
tuning $\beta = \logit^{-1}(\adaptstate)$ in the DGI-CPF,
with $Q_\adaptstate = \cnQ{\beta}$.
The algorithm is most practical with \textsc{PickPath-BS}.

We conclude this section with a consistency result for
Algorithm \ref{alg:aai-cpf}, using the
adaptation mechanisms in Algorithms
\ref{alg:adapt-fdi-am} and \ref{alg:adapt-fdi-aswam}.
In Theorem \ref{thm:slln},
we denote $(\mu_j,\Sigma_j) = \adaptstate_j$
in the case of Algorithm \ref{alg:adapt-fdi-am},
and $(\mu_j,\Sigma_j,\delta_j) = \adaptstate_j$
with Algorithm \ref{alg:adapt-fdi-aswam}.

\begin{theorem}
    \label{thm:slln} 
Suppose $D$ is a compact set,
a uniform mixing condition (Assumption
\ref{a:mixing} in Appendix
\ref{app:adapt-proof}) holds,
and there exists an $\epsilon>0$ such that
for all $j\ge 1$,
the smallest eigenvalue $\lambda_{\min}(\Sigma_j)\ge \epsilon$,
and with Algorithm
\ref{alg:adapt-fdi-aswam} also $\delta_j\in [\epsilon,\epsilon^{-1}]$.
Then, for any bounded function $f:\X\to\infty$,
\[
    \frac{1}{n}\sum_{k=1}^n f(\refr{x}_{1:T}^{(k)})
    \xrightarrow{n\to\infty} \pi(f).\qquad \text{almost surely.}
\]
\end{theorem}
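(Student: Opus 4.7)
The plan is to cast Theorem \ref{thm:slln} as an application of a standard strong law of large numbers for adaptive MCMC (in the style of Roberts--Rosenthal or Andrieu--Moulines), which reduces the claim to verifying two ingredients: simultaneous uniform ergodicity of the family $\{P_\adaptstate\}_{\adaptstate\in\adaptspace}$ of $\pi$-invariant Markov kernels defined by Algorithm \ref{alg:ai-cpf} with $Q_\adaptstate$, and diminishing adaptation of the parameter sequence $(\adaptstate^{(j)})_{j\ge 1}$. The $\pi$-invariance of each $P_\adaptstate$ was established in Section \ref{sec:aux}, so the remaining work is entirely about controlling dependence on $\adaptstate$.

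First, I would collect the effective range of parameters into a set such as $\adaptspace_\epsilon = \{(\mu,\Sigma) : \mu\in D,\ \lambda_{\min}(\Sigma)\ge \epsilon\}$ (with the additional constraint $\delta\in[\epsilon,\epsilon^{-1}]$ for Algorithm \ref{alg:adapt-fdi-aswam}). Because $D$ is compact, the first-coordinate particles produced by the FDI-CPF reside in $D$, so the empirical covariance estimates $\Sigma_j$ have spectral norm bounded above by $\mathrm{diam}(D)^2$. Combined with the assumed lower eigenvalue bound, this places $\adaptstate^{(j)}$ in a compact subset of $\adaptspace_\epsilon$ almost surely, and the proposal density $\rwq{C(\adaptstate)}(x,y)$ is jointly continuous and bounded below on $D\times D$ uniformly in $\adaptstate\in\adaptspace_\epsilon$. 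Assumption \ref{a:mixing} then supplies constants $C<\infty$ and $\rho\in(0,1)$ with
\[
    \sup_{\adaptstate\in\adaptspace_\epsilon}\ \sup_{x\in\X^T}\
    \|P_\adaptstate^n(x,\uarg) - \pi\|_{\mathrm{TV}}\ \le\ C\rho^n,
\]
which is the (simultaneous) uniform ergodicity condition.

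Second, I would verify diminishing adaptation. In both Algorithm \ref{alg:adapt-fdi-am} and Algorithm \ref{alg:adapt-fdi-aswam} the parameter increments admit the deterministic bound $\|\adaptstate^{(j)}-\adaptstate^{(j-1)}\|\le K\eta_j$, using $X_1^{(i)}\in D$, $W_1^{(i)}\in[0,1]$ and $\alpha\in[0,1]$. Because $Q_\adaptstate$ enters $P_\adaptstate$ only through lines \ref{dcpf:gibbs}--\ref{dcpf:newinit} of Algorithm \ref{alg:ai-cpf} and the subsequent forward and backward sweeps do not depend on $\adaptstate$, I would couple two runs of the algorithm at $\adaptstate^{(j)}$ and $\adaptstate^{(j-1)}$ by sharing randomness for the CPF sweeps and coupling only the initial proposals through their Gaussian densities. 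The map $\adaptstate\mapsto \rwq{C(\adaptstate)}(x,y)$ being uniformly Lipschitz on $\adaptspace_\epsilon$ then gives $\sup_{x}\|P_{\adaptstate^{(j)}}(x,\uarg)-P_{\adaptstate^{(j-1)}}(x,\uarg)\|_{\mathrm{TV}}\le K'\eta_j\to 0$ almost surely, and the adaptive SLLN delivers the claim for every bounded $f$.

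The main obstacle is the TV-continuity step: propagating a perturbation of the initial proposal through the entire conditional particle filter, including the backward-sampling selection, while keeping constants uniform over $\adaptspace_\epsilon$. The compactness of $D$ and the eigenvalue lower bound are essential here because they keep $Q_\adaptstate$ away from degeneracy and make $(\adaptstate,x,y)\mapsto \rwq{C(\adaptstate)}(x,y)$ uniformly continuous; without either, neither the uniform minorisation underlying Assumption \ref{a:mixing} nor the Lipschitz dependence needed for diminishing adaptation would survive. Once these two uniform controls are in place, plugging into the abstract adaptive SLLN is routine and completes the proof.
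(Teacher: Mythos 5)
Your overall strategy---reducing the theorem to an adaptive-MCMC strong law by verifying simultaneous (uniform in $\adaptstate$) ergodicity plus diminishing adaptation---is the same route the paper takes (it invokes Theorem 2 of Saksman and Vihola), and your diminishing-adaptation argument is essentially the paper's Lemmas \ref{lem:kernel-diff}--\ref{lem:proposal-continuity}: a coupling/decomposition bounding $d(P_\adaptstate,P_{\adaptstate'})$ by $N\,d(Q_\adaptstate,Q_{\adaptstate'})$, combined with TV-Lipschitz continuity of the (Metropolised) Gaussian proposal over covariances whose eigenvalues lie in a compact interval, and the deterministic $O(\eta_j)$ bound on the parameter increments coming from compactness of $D$. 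That part of your plan is sound.

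The genuine gap is at the uniform-ergodicity step. You assert that because $\rwq{C(\adaptstate)}$ is bounded below on $D\times D$, Assumption \ref{a:mixing} ``supplies'' $\sup_{\adaptstate}\sup_x \|P_\adaptstate^n(x,\uarg)-\pi\|_{\mathrm{tv}} \le C\rho^n$. But Assumption \ref{a:mixing} is a condition on $Q_\adaptstate$ and the Feynman--Kac ingredients, not on the AI-CPF kernel $P_\adaptstate$, and a minorisation of $Q_\adaptstate$ does not transfer to $P_\adaptstate$ by positivity of the proposal density alone: $P_\adaptstate$ is a full conditional-particle-filter sweep with resampling and path selection. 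The paper's proof needs its Lemma \ref{lem:minorisation}: write $P_\adaptstate(x_{1:T},\uarg)=\int Q_\adaptstate(x_1,\ud x_0)\,P^*_{\mathrm{CPF},\adaptstate,x_0}(x_{1:T},\uarg)$, where $P^*_{\mathrm{CPF},\adaptstate,x_0}$ is a CPF for the modified model with initial law $Q_\adaptstate(x_0,\uarg)$, and then invoke a CPF-specific minorisation result (Corollary 12 of Andrieu--Lee--Vihola), whose hypotheses are exactly the bounded potentials of Assumption \ref{a:mixing}\ref{item:bounded-weights} and the normalising constant bounded away from zero uniformly in $(\adaptstate,x_0)$ of Assumption \ref{a:mixing}\ref{item:normalising-constant}---a condition your argument never engages with---to obtain a one-step Doeblin bound with constant independent of $\adaptstate$ and $x_0$; one must further observe that backward sampling is an additional Gibbs step so the bound persists, and that the adaptation statistics $\adaptdata^{(j)}$ depend on the whole particle system, so the SLLN must in fact be applied to the augmented kernel $\tilde P_\adaptstate$, which inherits the minorisation as an augmented version of $P_\adaptstate$ (a point your write-up also skips). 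As written, the claimed geometric bound is asserted rather than proved, and supplying this transfer argument is the main content of the paper's proof.
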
 
The proof of Theorem \ref{thm:slln} is given in Appendix
\ref{app:adapt-proof}. The proof is slightly more general, and
accomodates for instance $t$-distributed instead of Gaussian proposals
for the FDI-CPF.
We note that the latter stability condition, that is, existence of
the constant $\epsilon>0$,
may be enforced by introducing a `rejection' mechanism in the
adaptation; see the end of Appendix \ref{app:adapt-proof}. However, we have found
empirically that the adaptation is stable also without such
a stabilisation mechanism.


\subsection{Use within particle Gibbs}
\label{sec:with-pg} 

Typical application of HMMs in statistics involves not only smoothing,
but also inference of a number of
`hyperparameters' $\theta$, with prior density $\mathrm{pr}(\theta)$,
and with
\begin{equation}
  \label{eq:hmm-with-params}
    p(y_{1:T},x_{1:T}\mid \theta)
    = \gamma_\theta(x_{1:T})
    = M_1(x_1) G_1^{(\theta)}(x_1) \prod_{k=2}^T
    M_k^{(\theta)}(x_{k-1},x_k) G_k^{(\theta)}(x_{k-1},x_k).
\end{equation}
The full posterior, $\check{\pi}(\theta, x_{1:T}) \propto \mathrm{pr}(\theta)
\gamma_\theta(x_{1:T})$ may be inferred with the particle Gibbs (PG) algorithm
\cite{pmcmc}. (We assume here that $M_1$ is diffuse, and thereby
independent of $\theta$.)

The PG alternates between (Metropolis-within-)Gibbs updates for
$\theta$ conditional on $x_{1:T}$, and CPF updates for $x_{1:T}$
conditional on $\theta$. The (A)AI-CPF applied with
$M_{2:T}^{(\theta)}$ and $G_{1:T}^{(\theta)}$ may be used as a
replacement of the CPF steps in a PG. Another adaptation, independent
of the AAI-CPF,  may be used for the hyperparameter updates; see for
instance the discussion in \cite{vihola-amcmc}.

Algorithm \ref{alg:aai-pg} summarises a generic adaptive PG with the AAI-CPF.
Line \ref{line:pg-param} involves an update of $\theta^{(j-1)}$ to
$\theta^{(j)}$ using transition probabilities
$K_{\adaptstate_\theta}(\uarg, \uarg \mid x_{1:T})$
which leave $\check{\pi}(\theta\mid x_{1:T})$ invariant,
and Line \ref{line:pg-param-adapt} is (optional) adaptation.
This could, for instance, correspond to the robust adaptive Metropolis
algorithm (RAM) as suggested in \cite{vihola-amcmc}.
Lines \ref{line:pg-ai-cpf} and \ref{line:pg-ai-cpf-adapt} implement
the AAI-CPF. Note that without Lines \ref{line:pg-param-adapt} and
\ref{line:pg-ai-cpf-adapt}, Algorithm \ref{alg:aai-pg}
determines a $\check{\pi}$-invariant transition rule.

\begin{algorithm}
    \caption{$\text{AAI-PG}(\theta^{(0)}, \refr{x}_{1:T}^{(0)};
      Q_{C(\adaptstate^{(0)})}, M_{2:T}, G_{1:T}, N)$}
     \label{alg:aai-pg} 
\begin{algorithmic}[1]
\For{$j = 1, \ldots, n$}
\State \label{line:pg-param} $(\theta^{(j)},\adaptdata_\theta^{(j)})
\sim K_{\adaptstate_\theta^{(j-1)}}(\theta^{(j-1)}, \uarg\mid
\refr{x}_{1:T}^{(j-1)})$.
\State \label{line:pg-param-adapt} $\adaptstate_\theta^{(j)} \gets
\textsc{Adapt}_\theta(\adaptstate_\theta^{(j-1)}, \theta^{(j)},
\adaptdata_\theta^{(j)})$.
\State \label{line:pg-ai-cpf} $(\refr{x}_{1:T}^{(j)}, \adaptdata^{(j)}) \gets
          \text{AI-CPF}(\refr{x}_{1:T}^{(j-1)};
          Q_{\adaptstate^{(j-1)}}, M_{2:T}^{(\theta^{(j)})},
          G_{1:T}^{(\theta^{(j)})}, N)$.
\State \label{line:pg-ai-cpf-adapt} $\adaptstate^{(j)} \gets \textsc{Adapt}(\adaptstate^{(j-1)},
\adaptdata^{(j)}, j)$.
\EndFor
\State \textbf{output}
$\big((\theta^{(1)},\refr{x}_{1:T}^{(1)}),\ldots,(\theta^{(n)},\refr{x}_{1:T}^{(n)})\big)$.
\end{algorithmic}
\end{algorithm}




\section{Experiments} \label{sec:experiments} 
In this section, we study the application of the methods presented in
Section \ref{sec:diffuse-init} in practice.
Our focus will be on the case of the bootstrap filter, that is, 
$M_1(\ud x_1) = p(x_1) \ud x_1$, $M_k(x_{k-1}, \ud x_k) = p(x_k\mid x_{k-1}) \ud x_k$ 
and $G_k(x_{1:k}) = p(y_k\mid x_k)$.

We start by investigating two simple HMMs: the noisy random walk
model (RW), that is,
\eqref{eq:noisyar} with $\rho = 1$,
and the following stochastic volatility (SV) model:
\begin{equation}
  \label{eq:sv}
  \begin{aligned}
    x_{k+1} &= x_{k} + \eta_k, \\
    y_{k} &= e^{x_k}\epsilon_k,
  \end{aligned}
\end{equation}
with $x_1 \sim N(0, \sigma_1^2)$, $\eta_k \sim N(0, \sigma_x^2)$ and
$\epsilon_k \sim N(0, \sigma_y^2)$.
In Section \ref{sec:mvnormal-experiments},
we study the dependence of the method with varying dimension,
with a static multivariate normal model.  We conclude in
Section \ref{sec:seir} by applying our methods in a realistic inference
problem related to modelling the COVID-19 epidemic in Finland.
%
\subsection{Comparing DGI-CPF and CPF-BS}
\label{sec:dgi-cpf-experiments} 

We first studied how the DGI-CPF performs in comparison to the CPF-BS when
the initial distributions of the RW and SV model are diffuse.
Since the efficiency of sampling is
affected by both the values of the model parameters (cf.~Figure
\ref{fig:diffinit-poor-mixing-example}) and the number of particles $N$, we
experimented with a range of values $N \in \{8, 16, 32, 64, 128, 256,
512\}$ for which we applied both methods with $n = 10000$ iterations
plus $500$ burn-in. We simulated data from both the
RW and SV models with $T = 50$, $x_1 = 0$, $\sigma_y = 1$ and
varying $\sigma_x \in \{0.01, 0.05, 0.1, 0.5, 1, 2, 5, 10, 20,
50, 100, 200\}$.
We then applied both methods for each dataset with the corresponding $\sigma_x$,
but with varying $\sigma_1 \in \{10, 50, 100, 200, 500, 1000\}$,
to study the sampling efficiency under different parameter
configurations ($\sigma_x$ and $\sigma_1$).
For the DGI-CPF, we varied
the parameter $\beta\in \{0.01, 0.02, \ldots, 0.99\}$.
We computed the estimated integrated autocorrelation time ($\iact$)
of the simulated values of $x_1$ and scaled this by the number of particles $N$.
The resulting quantity, the inverse relative efficiency ($\ire$), measures the asymptotic efficiencies of estimators with varying computational costs \citep{glynn}. 

Fig. \ref{fig:dgi-cpf-opt-vs-cpf-bs-NOISYAR-logiactn} shows the comparison of
the CPF-BS with the best DGI-CPF, that is, the DGI-CPF with the $\beta$ that
resulted in the lowest $\iact$ for each parameter configuration and $N$.

The results indicate that with N fixed, a successful tuning
of $\beta$ can result in greatly improved mixing in comparison with
the CPF-BS. While the performance of the CPF-BS approaches that of the best
DGI-CPF with increasing $N$, the difference in performance remains substantial with
parameter configurations that are challenging for the CPF-BS.

The optimal $N$ which minimizes the $\ire$ depends on the parameter
configuration. For `easy' configurations (where $\ire$ is small), even $N=8$
can be enough, but for more `difficult' configurations (where $\ire$ is large), higher 
values of $N$ can be optimal.
Similar results for the SV model are shown in the supplement Figure
\ref{fig:dgi-cpf-opt-vs-cpf-bs-SV-logiactn}, and lead to similar conclusions.
\begin{figure} 
  \centering
  \includegraphics[width=0.8\textwidth]{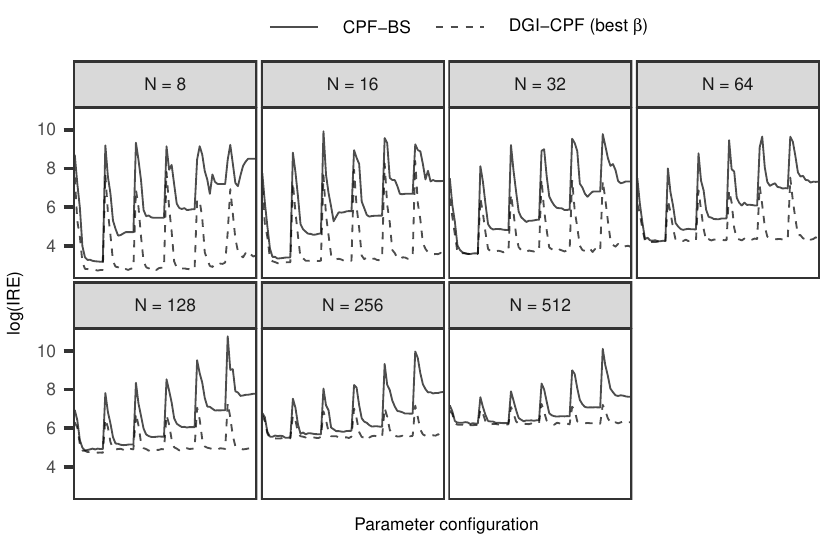}
  \caption{The $\log{(\ire)}$ resulting from the application of the CPF-BS and
  the best case DGI-CPF to the RW model.
  The horizontal axis depicts different configurations of
  $\sigma_1$ and $\sigma_x$, and in each panel $N$ varies.}
  \label{fig:dgi-cpf-opt-vs-cpf-bs-NOISYAR-logiactn}
\end{figure}

The varying `difficulty' of the parameter configurations is further illustrated in Figure
\ref{fig:cpf-bs-dgi-cpf-SV-log-iact-heatmap-npar256}, which shows the
$\log{(\iact)}$ for the SV model with $N = 256$ particles. 
The CPF-BS performed the worst when the initial distribution was
very diffuse with respect to the state noise $\sigma_x$, as expected. In contrast,
the well-tuned DGI-CPF appears rather robust with respect to changing parameter
configuration. The observations were similar with other $N$, and for the RW
model; see supplement Figure
\ref{fig:cpf-bs-dgi-cpf-NOISYAR-log-iact-heatmap-npar256}.
\begin{figure} 
  \centering
  \includegraphics[width=0.8\textwidth]{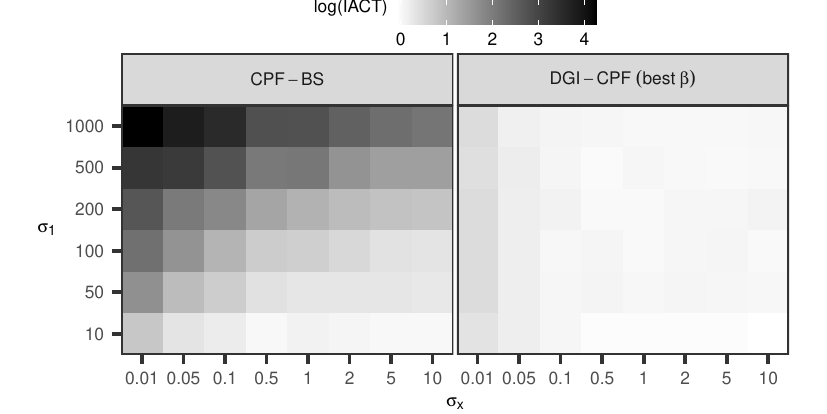}
  \caption{The $\log{(\iact)}$ of the CPF-BS (left) and the best case
  DGI-CPF (right) with respect to $\sigma_1$ and $\sigma_x$ in the case of the
  SV model and $N = 256$.}
  \label{fig:cpf-bs-dgi-cpf-SV-log-iact-heatmap-npar256}
\end{figure}

The results in Figures \ref{fig:dgi-cpf-opt-vs-cpf-bs-NOISYAR-logiactn} and
\ref{fig:cpf-bs-dgi-cpf-SV-log-iact-heatmap-npar256} illustrate the
potential of the DGI-CPF, but are overly optimistic because
in practice, the $\beta$ parameter of the DGI-CPF cannot be chosen optimally.
Indeed, the choice of $\beta$ can have a substantial effect on the mixing.
Figure \ref{fig:dgi-cpf-log-mean-iact-beta-reps-npar-128-sigmax-1_0-sigmax1-50-sv}
illustrates this in the case of the SV model
by showing the logarithm of the mean $\iact$ over replicate runs of the DGI-CPF, for a range of
$\beta$.
Here, a $\beta$ of approximately 0.125 seems to yield close to optimal performance,
but if the $\beta$ is chosen too low, the sampling efficiency is greatly reduced, rendering
the CPF-BS more effective.
\begin{figure}
  \centering
  \includegraphics[width=0.8\textwidth]{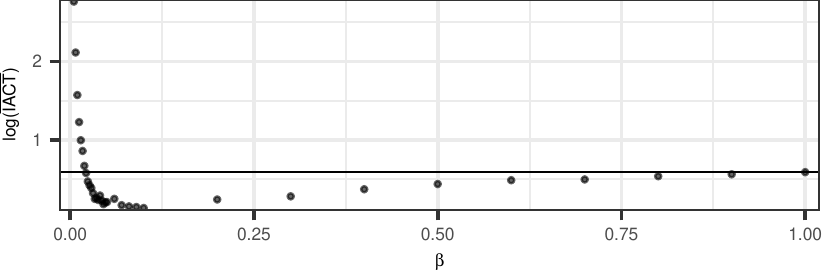}
  \caption{The logarithm of the mean $\iact$ over 5 replicate runs of the DGI-CPF with
  respect to varying $\beta$.
  The dataset was simulated from the SV model with parameters
  $\sigma_x = 1$ and $\sigma_{1} = 50$ and fixed in each replicate run of the
  algorithm. $N$ was set to $128$.
  The horizontal line depicts the performance of the CPF-BS.}
  \label{fig:dgi-cpf-log-mean-iact-beta-reps-npar-128-sigmax-1_0-sigmax1-50-sv}
\end{figure}

This highlights the importance of choosing an appropriate value for $\beta$, and
motivates our adaptive DGI-CPF, that is, Algorithm \ref{alg:aai-cpf} together with
Algorithm \ref{alg:adapt-dgi}.
We explored the effect of the target acceptance rate $\alphatarget
\in \{0.01, 0.02, \ldots, 1\}$, with the same
datasets and parameter configurations as before.
Figure \ref{fig:dgi-cpf-vs-cpf-bs-target-log-mean-iactn} summarises the
results for both the SV and RW models, in comparison with the CPF-BS.
The results indicate that with a wide range of target acceptance
rates, the adaptive DGI-CPF exhibits improved mixing over the CPF-BS.
When $N$ increases, the optimal values for $\alphatarget$ appear to tend
to one. However, in practice, we are interested in a moderate $N$, for
which the results suggest that the best candidates for values of
$\alphatarget$ might often be found in the range from $0.7$ to $0.9$.

For the CPF-BS, the mean $\ire$ is approximately constant, which might suggest
that the optimal number of particles is more than 512.
In contrast, for an appropriately tuned DGI-CPF, the mean $\ire$ is optimised 
by $N = 32$ in this experiment. 
\begin{figure}
  \centering
  \includegraphics[width=0.8\textwidth]{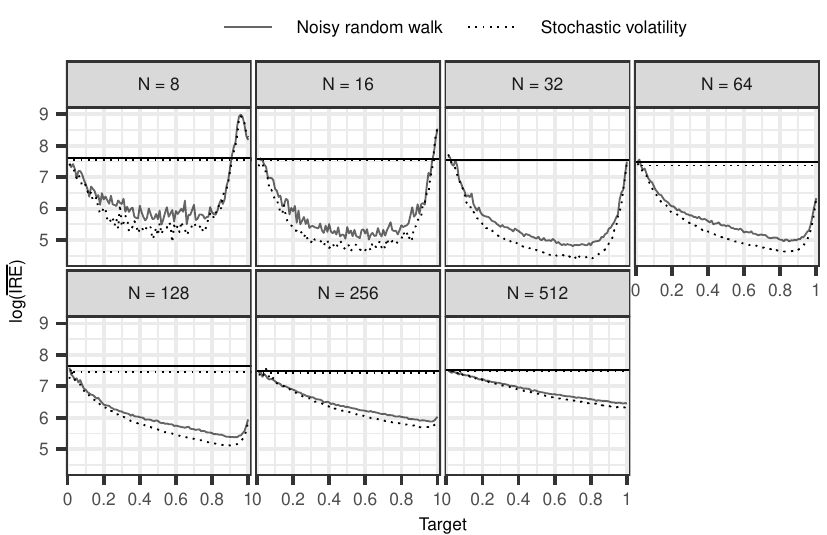}
  \caption{The logarithm of the mean $\ire$ over
  the parameter configurations with the adaptive DGI-CPF and
  varying target acceptance rates.
  The horizontal lines depict the performance of the CPF-BS.}
  \label{fig:dgi-cpf-vs-cpf-bs-target-log-mean-iactn}
\end{figure}
%

\subsection{Comparing FDI-CPF and particle Gibbs}
\label{sec:fdi-cpf-experiments} 
Next, we turn to study a fully diffuse
initialisation. In this case, $M_1$ is improper, and we cannot use the CPF directly.
Instead, we compare the performance of the adaptive FDI-CPF
with what we call the diffuse particle Gibbs (DPG-BS) algorithm.
The DPG-BS is a standard particle Gibbs algorithm, where the first
latent state $x_1$ is
regarded as a `parameter', that is, the algorithm alternates between
the update of $x_1$ conditional
on $x_{2:T}$ using a random-walk Metropolis-within-Gibbs step,
and the update of the latent state variables
$x_{2:T}$ conditional on $x_1$ using the CPF-BS.
We also adapt the Metropolis-within-Gibbs proposal distribution
$Q_{\mathrm{DPG}}$ of the DPG-BS, using the RAM algorithm \cite{ram};
see also the discussion in \cite{vihola-amcmc}.
For further details regarding our implementation of the DPG-BS,
see Appendix \ref{app:dpg_details}.

We used a similar simulation experiment as with the adaptive
DGI-CPF in Section \ref{sec:dgi-cpf-experiments},
but excluding $\sigma_1$, since the initial distribution was now fully diffuse.
The target acceptance rates in the FDI-CPF with the ASWAM adaptation were again varied
in $\alphatarget \in \{0.01, 0.02, \ldots, 1\}$ and the scaling factor in the AM
adaptation was set to $c = 2.38^2$.
In the DPG-BS, the target acceptance rate for updates of the initial state using
the RAM algorithm was fixed to 0.441 following
\cite{gelman-roberts-gilks}.

Figure \ref{fig:NOISYAR-fdi-cpf-vs-dpg-cpf-logiactn} shows results with the RW
model for the DPG-BS, the FDI-CPF with the AM adaptation, and the FDI-CPF with
the ASWAM adaptation using the best value for $\alphatarget$. The FDI-CPF
variants appear to perform better and improve upon the performance of
the DPG-BS especially with small $\sigma_x$. Similar to Fig. \ref{fig:dgi-cpf-opt-vs-cpf-bs-NOISYAR-logiactn} and
\ref{fig:cpf-bs-dgi-cpf-SV-log-iact-heatmap-npar256}, the optimal
$N$ minimizing the $\ire$ depends on the value of
$\sigma_x$: smaller values of $\sigma_x$ call for higher
number of particles. 

The performance of the adaptive FDI-CPF appears similar regardless of the
adaptation used, because the chosen scaling factor $c =
2.38^2$ for a univariate model was close to the optimal
value found by the ASWAM variant in this example. We experimented
also with $c = 1$, which led to less efficient AM, in
the middle ground between the ASWAM and the DPG-BS.

The $\iact$ for the DPG-BS stays approximately constant with increasing $N$,
which results in a $\log{(\ire)}$ that increases roughly by a constant
as $N$ increases.  This is understandable, because
in the limit as $N\to\infty$, the CPF-BS (within the DPG-BS) will correspond to a Gibbs
step, that is, a perfect sample of $x_{2:T}$ conditional on $x_1$. Because of the
strong correlation between $x_1$ and $x_2$, even an `ideal' Gibbs
sampler remains inefficient, and the small variation seen in the panels for the DPG-BS
is due to sampling variability.
The results for the SV model,
with similar findings, are shown in the supplement Figure
\ref{fig:SV-fdi-cpf-vs-dpg-cpf-logiactn}.
\begin{figure}
  \centering
  \includegraphics[width=0.8\textwidth]{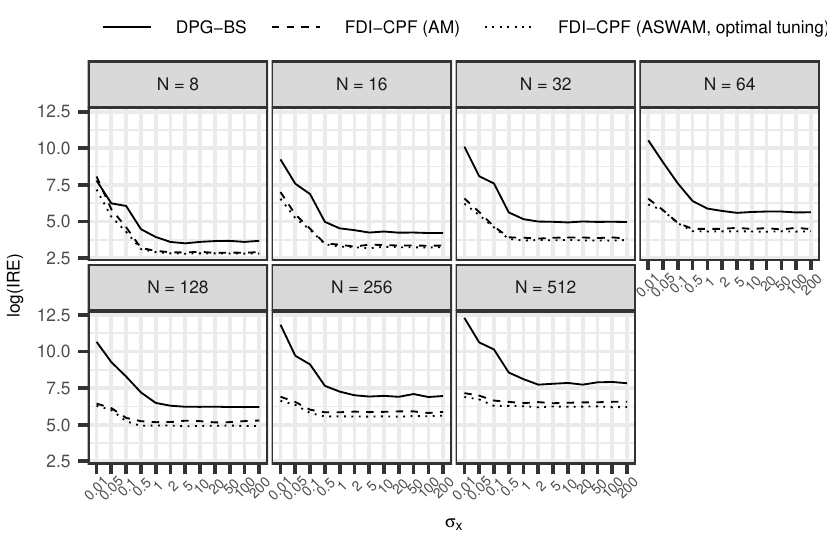}
  \caption{The $\log{(\ire)}$ for the DPG-BS, FDI-CPF with the AM adaptation and the best case
  FDI-CPF with the ASWAM adaptation to the datasets generated with varying $\sigma_x$ from
  the RW model.}
  \label{fig:NOISYAR-fdi-cpf-vs-dpg-cpf-logiactn}
\end{figure}

Figure \ref{fig:fdi-dpg-log-mean-iactn-wrt-tuning} shows the logarithm of the mean
$\ire$ of the FDI-CPF with the ASWAM adaptation with respect to
varying target acceptance rate $\alphatarget$. The results are reminiscent
of Figure \ref{fig:dgi-cpf-vs-cpf-bs-target-log-mean-iactn} and show
that with a moderate fixed $N$, the FDI-CPF with the ASWAM adaptation
outperforms the DPG-BS with a wide range of values for $\alphatarget$. The
optimal value of $\alphatarget$ seems to tend to one as $N$ increases, but
again, we are mostly concerned with moderate $N$.
For a well-tuned FDI-CPF the minimum mean $\ire$ is found when $N$ is roughly 
between 32 and 64. 
\begin{figure}
  \centering
  \includegraphics[width=0.8\textwidth]{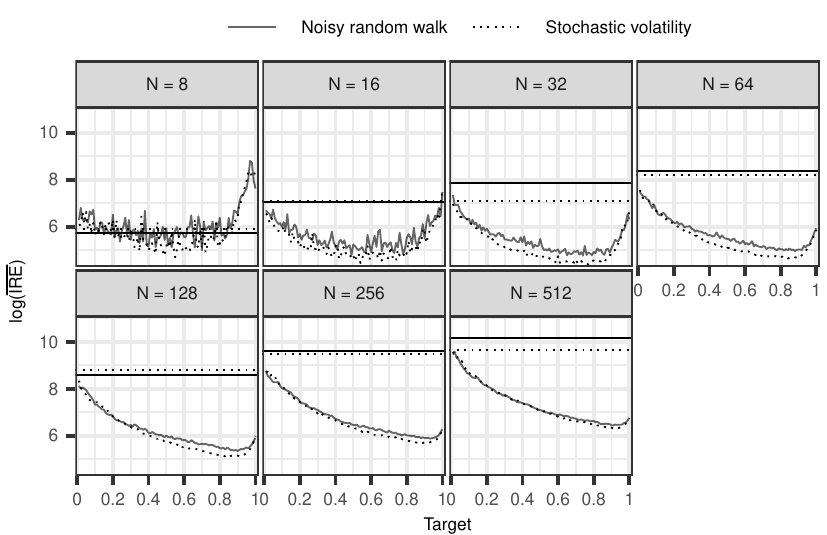}
  \caption{A comparison of the FDI-CPF with the ASWAM adaptation against the DPG-BS.
  The horizontal axis shows the target acceptance rate $\alphatarget$
  used in the adaptive FDI-CPF.
  The logarithm of the mean $\ire$ on the vertical axis is computed over
  the different $\sigma_x$ values.
  The black horizontal lines show the performance with the DPG-BS.}
  \label{fig:fdi-dpg-log-mean-iactn-wrt-tuning}
\end{figure}

\subsection{The relationship between state dimension, number of particles and
optimal target acceptance rate} \label{sec:mvnormal-experiments} 
A well chosen value for the target acceptance rate $\alphatarget$ appears
to be key for obtaining good performance with the adaptive DGI-CPF and
the FDI-CPF with the ASWAM adaptation. In Sections
\ref{sec:dgi-cpf-experiments}--\ref{sec:fdi-cpf-experiments}, we
observed a relationship between $N$ and the optimal target acceptance
rate, denoted here by $\alphaopt$, with two univariate HMMs. It is
expected that $\alphaopt$ is generally somewhat model-dependent, but
in particular, we suspected that the methods might behave differently
with models of different state dimension $d$.

In order to study the relationship between $N$, $d$ and $\alphaopt$
in more detail, we considered a simple multivariate normal model with
$T = 1$, $M_1(x) \propto 1$, and $G_1(x_1) = N(x_1; 0, \sigma I_d)$, the
density of $d$ independent normals.
We conducted a simulation experiment with 6000 iterations
plus 500 burn-in. We applied the FDI-CPF with the ASWAM adaptation with
all combinations of $N \in \{2^4, 2^5, \ldots, 2^{11}\}$,
$\alphatarget \in \{0.01, 0.02, \ldots, 1\}$,
$\sigma \in \{1, 5, 10, 50, 100\}$,
and with dimension $d \in \{1, 2, \ldots, 10\}$.
Unlike before, we monitor the $\iact$ over the samples of $x_1$ as an 
efficiency measure.

Figure \ref{fig:fdi-cpf-aswam-mvnormal-target-vs-log-mean-iact} summarises
the results of this experiment.
With a fixed state dimension, $\alphaopt$ tended towards 1 with
increasing numbers of particles $N$, as observed with the RW and SV models
above.
With a fixed number of particles $N$, $\alphaopt$ appears to
get smaller with increasing state dimension $d$, 
but the change rate appears slower with higher $d$.
Again, with moderate values for $N$ and $d$, the values in the range 0.7--0.9 seem
 to yield good performance.
\begin{figure}
  \centering
  \includegraphics[width=0.7\textwidth]{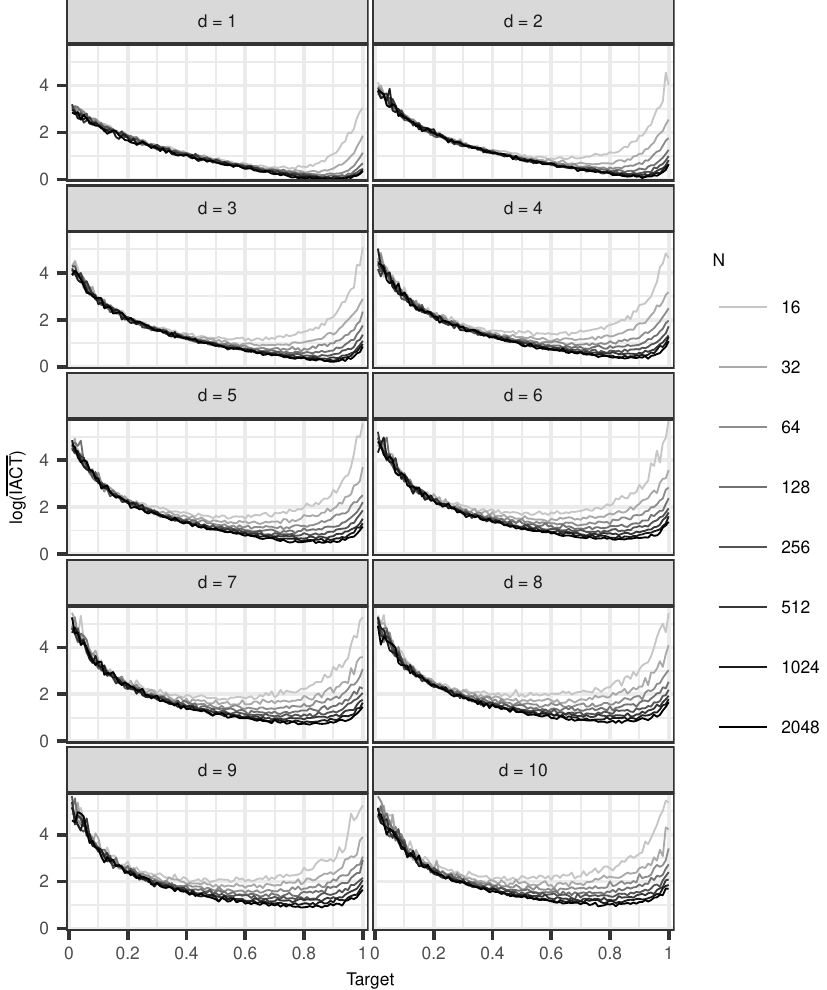}
  \caption{The effect of state dimension $d$, number of particles $N$ and
    target acceptance rate $\alphatarget$
           on the logarithm of the mean $\iact$ in the multivariate normal model.
           The means are computed over the different $\sigma$
           in the simulation experiment.}
  \label{fig:fdi-cpf-aswam-mvnormal-target-vs-log-mean-iact}
\end{figure}

Figure \ref{fig:alphaopt-vs-log-npar-dim} shows a different view of the same
data:
$\logit{(\alphaopt)}$ is plotted with respect to $\log{(N)}$ and $d$.
Here, we computed $\alphaopt$ by taking the target acceptance rate
that produced the lowest $\iact$ in the simulation experiment, for each value
of $\sigma$, $N$ and $d$.
At least with moderate $\alphaopt$ and $N$,
there appears to be a roughly linear relationship between
$\logit(\alphaopt)$ and $\log(N)$,
when $d$ is fixed.
However, because of the lack of theoretical backing, we do not suggest
to use such a simple model for choosing $\alphaopt$ in practice.
\begin{figure}
  \centering
  \includegraphics[width=0.8\textwidth]{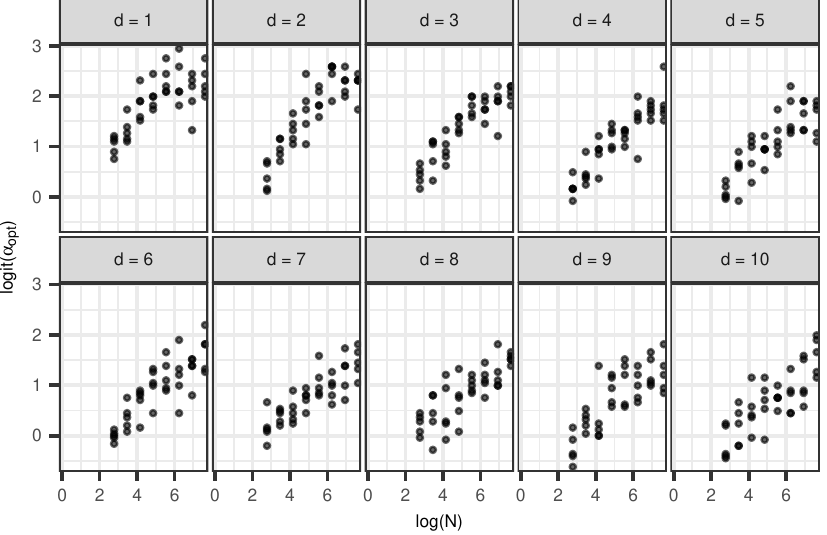}
  \caption{The best target acceptance rate $\alphaopt$ with respect to the
    the number of particles $N$ and state dimension $d$ on
    the multivariate normal model.}
  \label{fig:alphaopt-vs-log-npar-dim}
\end{figure}
%

\subsection{Modelling the COVID-19 epidemic in Finland}
\label{sec:seir} 
Our final experiment is a realistic inference problem arising from the
modelling of the progress of the COVID-19 epidemic in Uusimaa, the
capital region of Finland. Our main interest is in estimating the
time-varying transmission rate, or the basic reproduction number
$\rnaught$, which is expected to change over time, because of a number
of mitigation actions and social distancing. The
model consists of a discrete-time `SEIR' stochastic compartment model,
and a dynamic model for $\rnaught$; such epidemic models have been used earlier
in different contexts \citep[e.g.][]{shubin-lebedev-lyytikainen-auranen}.

We use a simple SEIR without age/regional stratification. That is,
we divide the whole population $\popsize$ to
four separate states:
susceptible ($S$), exposed ($E$), infected ($I$) and removed ($R$), so
that $\popsize = S + E + I + R$, and assume that
$\popsize$ is constant.
We model the transformed $\rnaught$, denoted by $\rho$, such that
$\rnaught = \rnaught^{\mathrm{max}}\logit^{-1}(\rho)$, where
$\rnaught^{\mathrm{max}}$ is the maximal value for $\rnaught$.
The state vector of the model at time $k$ is, therefore,
$X_k = (S_k, E_k, I_k, R_k, \rho_k)$.
The transition probability of the model is:
\begin{equation*}
  \begin{aligned}
    S_{k + 1} &= S_k - \Delta E_{k + 1},
    &\Delta E_{k+1} \sim \Binom(S_k, p_\beta),
    \ p_\beta &= 1 - \exp{(- \beta_k (I_k / \popsize))}, \\
    E_{k + 1} &= E_k + \Delta E_{k + 1} - \Delta I_{k + 1},
    &\Delta I_{k+1} \sim \Binom(E_k, p_a),
    \ p_a &= 1 - \exp{(- a)}, \\
    I_{k + 1} &= I_k + \Delta I_{k + 1} - \Delta R_{k + 1},
    &\Delta R_{k + 1} \sim \Binom(I_k, p_\gamma),
    \ p_\gamma &= 1 - \exp{(- \gamma)}, \\
    R_{k + 1} &= R_k + \Delta R_{k + 1}, & &\\
  \end{aligned}
\end{equation*}
and $\rho_{k+1} = \rho_k + \Delta \rho_{k + 1},$ with
$\Delta \rho_{k + 1} \sim \Normal(0,  \sigma^2)$.
Here, $\beta_k = \rnaught^{\mathrm{max}}\logit^{-1}(\rho_k)
p_\gamma$ is the time-varying infection rate,
and $a^{-1}$ and $\gamma^{-1}$ are the mean incubation period and recovery time, respectively.
Finally, the random walk parameter $\sigma$ controls how fast
$(\rho_k)_{k \geq 2}$ can change.

The data we use in the modelling consist of the daily number of individuals tested
positive for COVID-19 in Uusimaa \cite{thl-covid-data}.
We model the counts with a negative binomial distribution dependent on the number
of infected individuals:
\begin{equation}
  \label{eq:seir_obs}
  Y_k \sim \NegBinom\left(e p_{\gamma}\dfrac{p}{1 - p}I_k, p\right).
\end{equation}
Here, the parameter $e$ denotes sampling effort, that is, the average
proportion of infected individuals that are observed, and $p$ is the failure
probability of the negative binomial distribution, which controls the variability
of the distribution.

In the beginning of the epidemic, there is little information available regarding
the initial states, rendering the diffuse initialisation a
convenient strategy.
We set
\begin{equation}
  \label{eq:seir-M1}
  M_1(S_1, E_1, I_1, R_1, \rho_1) = 1(S_1 + E_1 + I_1 = \popsize)
     1(S_1, E_1, I_1 \geq 0)1(R_1 = 0),
\end{equation}
where the number of removed $R_1 = 0$ is justified because we assume
all were susceptible to COVID-19, and that the epidemic has started
very recently.

In addition to the state estimation, we are interested in estimating the
parameters $\sigma$ and $p$.
We assign the prior $N(-2.0, (0.3)^2)$
to $\log{(\sigma)}$ to promote gradual changes in $\rnaught$, and an
uninformative prior, $N(0, 10^2)$, for
$\logit(p)$.
The remaining parameters are fixed to
$\popsize = 1638469$, $\rnaught^{\mathrm{max}} = 10$, $a = 1/3$, $\gamma = 1/7$
and $e = 0.15$, which are in part inspired by the values
reported by the Finnish Institute for Health and Welfare.

We used the AAI-PG (Algorithm \ref{alg:aai-pg}) with the FDI-CPF with the ASWAM
adaptation, and a RAM adaptation
\citep{ram, vihola-amcmc} for $\sigma$ and $p$, (i.e.~in the lines
\ref{line:pg-param}--\ref{line:pg-param-adapt} of Algorithm
\ref{alg:aai-pg}). The form of \eqref{eq:seir-M1} leads to the version
of the FDI-CPF discussed in Section \ref{sec:fdi-cpf} where the
initial distribution is uniform with constraints.
We use a random-walk proposal to generate proposals
  $(\rho_1,E_1,I_1)\to (\rho_1^*,E_1^*,I_1^*)$, round $E_1^*$ and
$I_1^*$ to the nearest integer, and then set
$R_1^* = 0$ and $S_1^* = \popsize - E_1^* - I_1^* - R_1^{*}$.
We refer to this
variant of the AAI-PG as the FDI-PG algorithm.
Motivated by our findings in Sections \ref{sec:dgi-cpf-experiments}--\ref{sec:mvnormal-experiments}, we set the target
acceptance rate $\alphatarget$ in the FDI-CPF (within the FDI-PG) to 0.8.

As an alternative to the FDI-PG we also used a particle Gibbs algorithm
that treats
$\sigma$, $p$ as well as the initial states $E_1$, $I_1$ and $\rho_1$ as
parameters, using the RAM to adapt the random-walk proposal
\citep{ram,vihola-amcmc}.
This algorithm is the DPG-BS detailed in Appendix \ref{app:dpg_details}
with the difference that the parameters $\sigma$ and $p$ are updated
together with the initial state, and $p^{\mathrm{DPG}}$ additionally contains all terms of \eqref{eq:hmm-with-params} which depend on $\sigma$ and $p$.

We ran both the FDI-PG and the DPG-BS with $N = 64$ a total of $n=500,000$
iterations plus $10,000$ burn-in, and thinning of 10. Figures
\ref{fig:seir-ac-per-method-and-param} and \ref{fig:seir-traces}
show the first 50 autocorrelations and traceplots of
$E_1$, $I_1$, $(\rnaught)_1$, $\sigma$ and $p$, for both methods,
respectively. The corresponding $\iact$ and $\neff$ as well as
credible intervals for the means of these variables are shown in Table
\ref{tab:mix-stats-fdi-dpg}. The FDI-PG outperformed the DPG-BS with
each variable. However, as is seen from the supplement Figure
\ref{fig:seir-rel-iact-fdi-dpg}, the difference is most notable with
the initial states, and the relative performance of the DPG-BS approaches
that of the FDI-PG with increasing state index. The slow improvement in
the mixing of the state variable $R$ occurs because of the cumulative
nature of the variable in the model, and the slow mixing of early
values of $I$. We note that even though the mixing with the DPG-BS was
worse, the inference with $500,000$ iterations leads in practice to
similar findings. However, the FDI-PG could provide reliable inference
with much less iterations than the DPG-BS. The marginal density estimates of
the initial states and parameters are shown in the supplement Figure
\ref{fig:seir-dens-per-method-and-param}. The slight discrepancies in
the density estimates of $E_1$ and $I_1$ between the methods are
likely because of the poor mixing of these variables with the DPG-BS.

\begin{figure}
  \centering
  \includegraphics[width=0.8\textwidth]{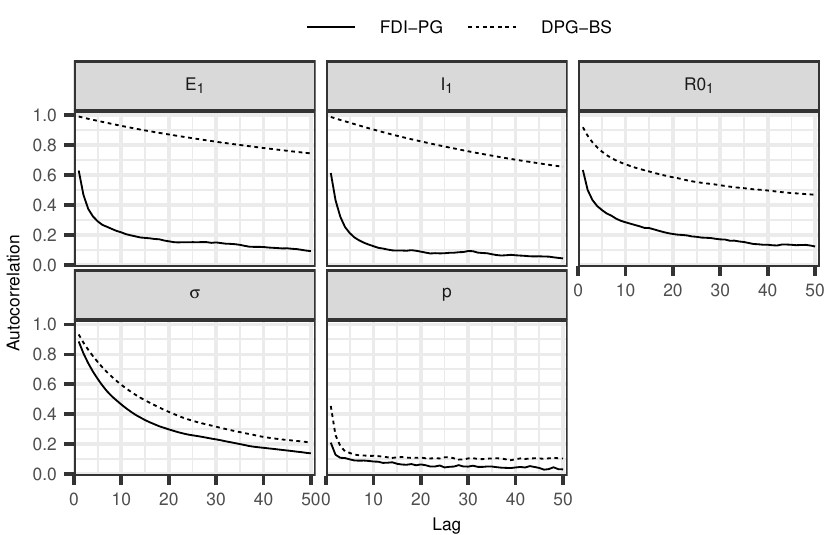}
  \caption{The first 50 autocorrelations for the model parameters and initial
  states with the FDI-PG and the DPG-BS computed after thinning the total 500000
  samples to every 10th sample.}
  \label{fig:seir-ac-per-method-and-param}
\end{figure}
\begin{figure}
  \centering
  \includegraphics[width=0.8\textwidth]{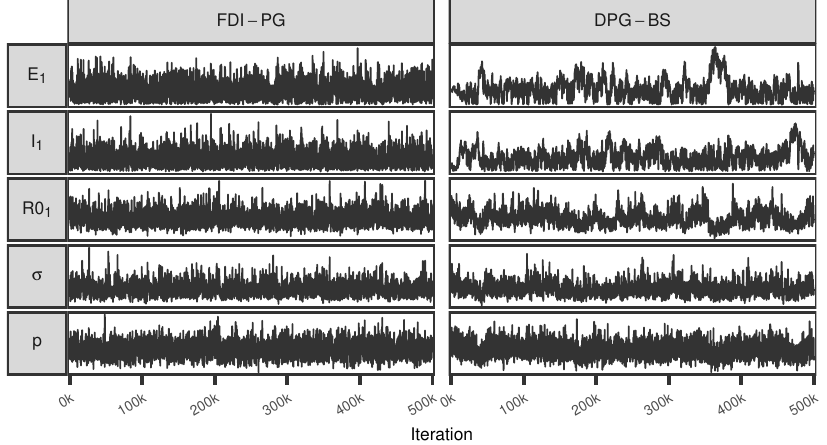}
  \caption{Traceplots for the initial states and model parameters
  for the SEIR model with the FDI-PG and the DPG-BS.
  The 5000 samples shown per method and parameter correspond to every 100th
  sample of the total 500000 samples simulated.}
  \label{fig:seir-traces}
\end{figure}
\begin{table}[h!]
\centering
\caption{The integrated autocorrelation time, effective sample size and credible
intervals of the mean for the initial states and parameters in the SEIR
model.}
\label{tab:mix-stats-fdi-dpg}
\ra{1.0}
\begin{tabular}{@{}ccccccc@{}}
\toprule
& \multicolumn{2}{c}{$\iact$} & \multicolumn{2}{c}{$\neff$} &
  \multicolumn{2}{c}{95\% mean CI} \\
\cmidrule(lr){2-3}\cmidrule(lr){4-5}\cmidrule(lr){6-7}
Variable & FDI-PG & DPG-BS & FDI-PG & DPG-BS & FDI-PG & DPG-BS \\
\midrule
$E_1$ & 30.087 & 882.583 & 1661.838 & 56.652 & (353.888, 374.054) & (301.379, 423.106)\\
$I_1$ & 14.296 & 626.963 & 3497.603 & 79.75 & (165.697, 172.388) & (155.374, 203.458)\\
$(\rnaught)_1$ & 32.168 & 436.755 & 1554.331 & 114.481 & (3.41, 3.513) & (3.266, 3.636)\\
$\sigma$ & 41.261 & 114.919 & 1211.796 & 435.088 & (0.15, 0.154) & (0.147, 0.154)\\
$p$ & 5.18 & 38.178 & 9652.794 & 1309.647 & (0.134, 0.135) & (0.133, 0.135)
\\\bottomrule
\end{tabular}
\end{table}


We conclude with a few words about our findings regarding the
changing transmission rate, which may be of some independent interest.
Figure \ref{fig:seir-fd-r0-and-postpred} displays the data and a
posterior predictive simulation, and the estimated distribution of $\rnaught$
computed by the FDI-PG with respect to time, with annotations about
events that may have had an effect on the spread of the epidemic, and/or the data.
The initial $\rnaught$ is likely somewhat overestimated, because of
the influx of infections from abroad, which were not explicitly modelled.
There is an overall decreasing trend since the beginning of `lockdown',
that is, when the government introduced the first mitigation
actions, including school closures. Changes in the testing criteria likely cause some bias soon
after the change, but no single action or event stands out.

Interestingly, if we look at our analysis, but restrict our focus
up to the end of April, we might be tempted to quantify how much certain
mitigation actions contribute to the suppression of the transmission
rate in order to build projections using scenario models
\cite[cf.][]{anderson-etal}.
However, when the mitigation measures have been gradually lifted
by opening the schools and restaurants, the openings do not appear to
have had notable consequences, at least until now. It is possible that
at this point, the number of infections was already so low, that it
has been possible to test all suspected cases and trace contacts so
efficiently, that nearly all transmission chains have been contained.
Also, the public may have changed their behaviour, and are now
following the hygiene and social distancing recommendations
voluntarily. Such a behaviour is, however, subject to change over time.

\begin{figure}
  \centering
  \includegraphics[width=\textwidth]{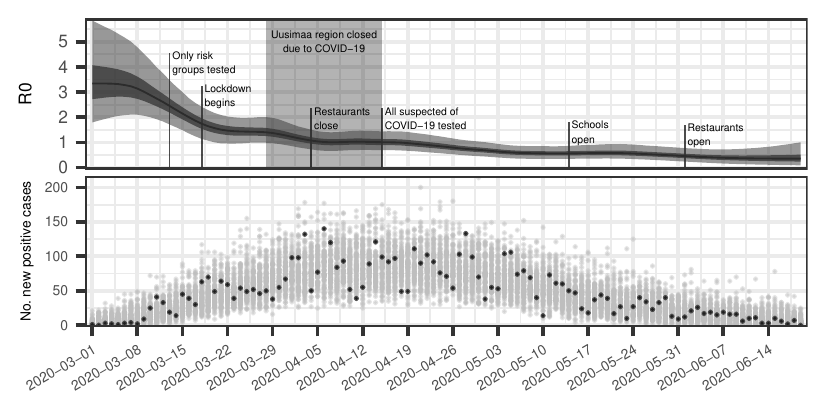}
  \caption{The distribution of the basic reproduction number $\rnaught$ (top) and
           a posterior predictive simulation (bottom) based on the posterior
           distribution computed with the FDI-PG.
           The plot for $\rnaught$ shows the median in black and probability
           intervals (75\% and 95\%) in shades of gray.
           The black points in the bottom plot represent the data used.
           The gray points represent observations simulated conditional on the posterior
           distribution of the model parameters and states.}
    \label{fig:seir-fd-r0-and-postpred}
\end{figure}


\section{Discussion} \label{sec:discussion} 

We presented a simple general auxiliary variable method for the CPF
for HMMs with diffuse initial distributions, and focused on two
concrete instances of it: the FDI-CPF for a uniform
initial density $M_1$ and the DGI-CPF for a Gaussian $M_1$.  We
introduced two mechanisms to adapt the FDI-CPF automatically:
the adaptive Metropolis (AM) \cite{haario-saksman-tamminen} and a method
similar to a Rao-Blackwellised adaptive scaling within adaptive Metropolis
(ASWAM) \citep[cf.][]{andrieu-thoms}, and provided a proof of their
consistency. We also suggested an adaptation for the DGI-CPF, based on
an acceptance rate optimisation. The FDI-CPF or the DGI-CPF, including
their adaptive variants, may be used directly within a particle Gibbs
as a replacement for the standard CPF.

Our experiments with a noisy random walk model and a stochastic
volatility model demonstrated that the DGI-CPF and the FDI-CPF can
provide orders of magnitude speed-ups relative to a direct application
of the CPF and to diffuse initialisation using particle Gibbs,
respectively. Improvement was substantial also in our motivating
practical example, where we applied the adaptive FDI-CPF (within
particle Gibbs) in the analysis of the COVID-19 epidemic in Finland,
using a stochastic `SEIR' compartment model with changing transmission
rate. Latent compartment models are, more generally, a good example
where our approach can be useful: there is substantial uncertainty
in the initial states, and it is difficult to design directly
a modified model that leads to efficient inference.

Our adaptation schemes are based on the estimated covariance matrix,
and a scaling factor which can be adapted using acceptance rate optimisation.
For the latter, we found empirically that with a moderate number of
particles, good performance was often reached with a target acceptance
rate ranging in 0.7--0.9. We emphasise that even though we found this `0.8 rule'
to work well in practice, it is only a heuristic, and the optimal
target acceptance rate may depend on the model of interest.  Related
to this, we investigated how the optimal target acceptance rate varied
as a function of the number of particles and state dimension in a
multivariate normal model, but did not find a clear pattern.
Theoretical verification of the acceptance rate heuristic, and/or
development of more refined adaptation rules, are left for future
research. We note that while the AM adaptation performed well in our
limited experiments, the ASWAM may be more appropriate when used
within particle Gibbs; see the discussion in \citep{vihola-amcmc}. The
scaling of the AM remains similarly challenging, due to the lack of theory
for tuning.


\section*{Acknowledgements} 

This work was supported by Academy of Finland grant 315619. We wish to acknowledge
CSC, IT Center for Science, Finland, for computational
resources, and thank Arto Luoma for inspiring discussions that led to the
COVID-19 example.


\bibliographystyle{abbrvnat}
\bibliography{refs}

\appendix

\section{Proof of Theorem \ref{thm:slln}}
\label{app:adapt-proof} 

For a finite signed measure $\xi$,
the total variation of $\xi$ is defined as
$\|\xi\|_{\mathrm{tv}} = \sup_{\|f\|_\infty\le 1} \xi(f)$,
where $\|f\|_\infty = \sup_x |f(x)|$, and the supremum is over
measurable real-valued functions $f$, and $\xi(f) = \int f \ud \xi$.
For Markov transitions $P$ and $P'$, define $d(P,P') = \sup_x \|
P(x,\uarg) - P'(x,\uarg) \|_{\mathrm{tv}}$.

In what follows, we adopt the following definitions:
\begin{definition} 
Consider Lines \ref{dcpf:f-cpf} and \ref{dcpf:cpf-select}
of Algorithm \ref{alg:ai-cpf} with
$\tilde{X}_1^{(1:N)}=\tilde{x}_1^{(1:N)}$ and $\refr{x}_{2:T}$, and define:
\begin{enumerate}[(i)]
    \item
      \label{item:cpf-path-space}
      $P_{\mathrm{CPF}}(\tilde{x}_1^{(1:N)},\refr{x}_{2:T}; \uarg)$
      as the law of $\tilde{X}_{1:T}^{(B_{1:T})}$, and
    \item
      \label{item:cpf-augmented-space}
      (In case \textsc{PickPath-BS} is used:)
      $\tilde{P}_{\mathrm{CPF}}(\tilde{x}_1^{(1:N)},\refr{x}_{2:T};
      \uarg)$
      as the law of $\big(\tilde{X}_{1:T}^{(B_{1:T})},
      (B_{1},V^{(1:N)}, \tilde{X}_{1}^{(1:N)})\big)$.
\end{enumerate}
Consider then Algorithm \ref{alg:ai-cpf} with parameterised
$Q=Q_\adaptstate$, and define, analogously:
\begin{enumerate}[resume*]
    \item
      \label{item:path-space}
      $P_\adaptstate$ is the Markov transition from
      $\refr{x}_{1:T}$ to $\tilde{X}_{1:T}^{(B_{1:T})}$.
    \item
      \label{item:augmented-space}
      $\tilde{P}_\adaptstate$
      is the Markov transition from
      from $(\refr{x}_{1:T},\uarg)$ to
      $\big(\tilde{X}_{1:T}^{(B_{1:T})},
      (B_{1},V^{(1:N)}, \tilde{X}_{1}^{(1:N)})\big)$.
\end{enumerate}
\end{definition}

\begin{lemma}
    \label{lem:kernel-diff} 
We have $d(P_\adaptstate, P_{\adaptstate'})
   \le N d(Q_\adaptstate, Q_{\adaptstate'})$
   and $d(\tilde{P}_\adaptstate, \tilde{P}_{\adaptstate'})
   \le N d(Q_\adaptstate, Q_{\adaptstate'})$.
\end{lemma}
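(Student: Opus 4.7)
The plan is a standard coupling argument, based on the structural observation that in Algorithm \ref{alg:ai-cpf} the parameter $\adaptstate$ enters only through Lines \ref{dcpf:gibbs} and \ref{dcpf:newinit}: one draw $X_0 \sim Q_\adaptstate(\refr{x}_1, \uarg)$ and $N-1$ conditionally independent draws $\tilde{X}_1^{(i)} \sim Q_\adaptstate(X_0, \uarg)$ for $i = 2,\ldots,N$. Lines \ref{dcpf:f-cpf} and \ref{dcpf:cpf-select} involve only $M_{2:T}$, $G_{1:T}$, and $\refr{x}_{2:T}$, and depend on $\adaptstate$ only through these $N$ samples (and through $\tilde{X}_1^{(1)} = \refr{x}_1$, which is independent of $\adaptstate$).

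First, I would construct a synchronous coupling of two runs of Algorithm \ref{alg:ai-cpf}, one with parameter $\adaptstate$ and one with $\adaptstate'$, both started from the common reference $\refr{x}_{1:T}$. For the Line \ref{dcpf:gibbs} draws, use a maximal coupling of $Q_\adaptstate(\refr{x}_1, \uarg)$ and $Q_{\adaptstate'}(\refr{x}_1, \uarg)$, which fails to agree with probability at most $\|Q_\adaptstate(\refr{x}_1,\uarg) - Q_{\adaptstate'}(\refr{x}_1,\uarg)\|_{\mathrm{tv}} \le d(Q_\adaptstate, Q_{\adaptstate'})$. Conditional on $X_0 = X_0'$, pair up the $N-1$ Line \ref{dcpf:newinit} draws via maximal couplings of $Q_\adaptstate(X_0,\uarg)$ and $Q_{\adaptstate'}(X_0,\uarg)$, each failing with probability at most $d(Q_\adaptstate, Q_{\adaptstate'})$. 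Finally, use common random numbers for all remaining operations in F-CPF and in PickPath.

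By the union bound over these $N$ couplings, the probability that \emph{any} of them fails is at most $N \, d(Q_\adaptstate, Q_{\adaptstate'})$. On the complement, both runs enter F-CPF with identical particle vectors $\tilde{X}_1^{(1:N)}$ and identical $\refr{x}_{2:T}$, and therefore (by shared randomness throughout) produce bit-identical outputs, both for the selected path $\tilde{X}_{1:T}^{(B_{1:T})}$ and for the augmented quantities $(B_1, V^{(1:N)}, \tilde{X}_1^{(1:N)})$. Since coupled random variables have total variation bounded by the probability that they differ, this yields simultaneously
\[
    d(P_\adaptstate, P_{\adaptstate'}) \le N \, d(Q_\adaptstate, Q_{\adaptstate'})
    \qquad\text{and}\qquad
    d(\tilde{P}_\adaptstate, \tilde{P}_{\adaptstate'}) \le N \, d(Q_\adaptstate, Q_{\adaptstate'}).
\]

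There is no real obstacle here; the only point requiring a moment of care is justifying the second bound, that is, checking that the augmented output in case (\ref{item:augmented-space}) really depends on $\adaptstate$ only through the same $N$ samples. This is immediate from inspection of PickPath-BS (Algorithm \ref{alg:backwardsampling}), whose computation of $V^{(1:N)}$ and $B_1$ uses only the stored particles and weights from F-CPF together with $M_{2:T}, G_{2:T}$, so the same coupling that forces the path outputs to agree forces the augmented outputs to agree as well.
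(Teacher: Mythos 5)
Your coupling argument is correct, but it is a genuinely different route from the paper's proof. The paper argues analytically: it writes $\hat{P}_\adaptstate(\refr{x}_{1:T},f)$ as an integral in which $Q_\adaptstate$ appears exactly $N$ times (once for the draw at $\refr{x}_1$ on Line \ref{dcpf:gibbs} and $N-1$ times for the draws at $x_0$ on Line \ref{dcpf:newinit}), and then telescopes the difference $\hat{P}_\adaptstate(\refr{x}_{1:T},f)-\hat{P}_{\adaptstate'}(\refr{x}_{1:T},f)$ by swapping one copy of $Q_\adaptstate$ for $Q_{\adaptstate'}$ at a time, bounding each of the $N$ resulting terms by $d(Q_\adaptstate,Q_{\adaptstate'})$ via test functions $g_0, g_i$ with $\|g\|_\infty\le 1$; the key structural observation --- that the CPF and PickPath steps do not involve $Q$ and so can be absorbed into these test functions, and that the same decomposition covers both $P_\adaptstate$ and $\tilde{P}_\adaptstate$ --- is exactly the one you exploit. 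Your version replaces the telescoping by a maximal coupling of the $N$ relevant $Q$-draws plus shared randomness downstream and a union bound, which is arguably more transparent and delivers both inequalities in one stroke; the paper's version buys freedom from any appeal to the existence and measurability of (maximal) couplings on a general state space, and works directly with the stated functional definition $\|\xi\|_{\mathrm{tv}}=\sup_{\|f\|_\infty\le1}\xi(f)$. One bookkeeping remark: under that definition the maximal-coupling failure probability is $\tfrac12\|Q_\adaptstate(x,\uarg)-Q_{\adaptstate'}(x,\uarg)\|_{\mathrm{tv}}$ while the coupling inequality reads $\|\mu-\nu\|_{\mathrm{tv}}\le 2\Pr(Y\ne Y')$, so your intermediate statements are phrased in the probabilists' normalisation; the two factors of $2$ cancel and the final bound $d(\hat{P}_\adaptstate,\hat{P}_{\adaptstate'})\le N\,d(Q_\adaptstate,Q_{\adaptstate'})$ is unaffected, but it is worth fixing one convention and sticking to it.
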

\begin{proof} 
Let $(\hat{P}_\mathrm{CPF}, \hat{P}_\adaptstate)
\in \{(P_\mathrm{CPF},
  P_\adaptstate),(\tilde{P}_\mathrm{CPF},\tilde{P}_\adaptstate)\}$
and take measurable real-valued function $f$ on the state space of
$\hat{P}_\adaptstate$ with $\|f\|_\infty=1$.

We may write
\begin{equation}
    \hat{P}_\adaptstate(\refr{x}_{1:T}, f)
    = \int Q_\adaptstate(\refr{x}_1,\ud x_0)
    \bigg[\int \delta_{\refr{x}_1}(\ud
    \tilde{x}_1^{(1)})\prod_{k=2}^N
    Q_\adaptstate(x_0,\ud \tilde{x}_1^{(k)})
    \hat{P}_{\mathrm{CPF}}(\tilde{x}_{1}^{(1:N)}, \refr{x}_{2:T},
    f)\bigg],
    \label{eq:ai-cpf-decomposition}
\end{equation}
and therefore, upper bound
\begin{align*}
    &| \hat{P}_\adaptstate(\refr{x}_{1:T}, f) -
    \hat{P}_{\adaptstate'}(\refr{x}_{1:T}, f) | \\
    &\le |Q_\adaptstate(\refr{x}_1,g_0^{(\refr{x}_{1:T})}) -
    Q_{\adaptstate'}(\refr{x}_1,g_0^{(\refr{x}_{1:T})})|
     + \sum_{i=2}^N
    \int Q_{\adaptstate'}(\refr{x}_1,\ud x_0)
    | Q_\adaptstate(x_0, g_i^{(\refr{x}_{1:T},x_0)} ) -
    Q_{\adaptstate'}(x_0, g_i^{(\refr{x}_{1:T},x_0)} ) |
\end{align*}

with functions defined below, which satisfy
$\|g_0^{(\refr{x}_{1:T})}\|_\infty\le 1$ and
$\|g_i^{(\refr{x}_{1:T},x_0)}\|_\infty \le 1$:
\begin{align*}
    g_{0}^{(\refr{x}_{1:T})}(x_0) &= \int \delta_{\refr{x}_1}(\ud
    \tilde{x}_1^{(1)})\prod_{k=2}^N
    Q_\adaptstate(x_0,\ud \tilde{x}_1^{(k)})
    P_{\mathrm{CPF}}(\tilde{x}_{1}^{(1:N)}, \refr{x}_{2:T}, f), \\
    g_i^{(\refr{x}_{1:T},x_0)}(\tilde{x}_1^{(i)}) &= \delta_{\refr{x}_1}(\ud
    \tilde{x}_1^{(1)}) \prod_{k=2}^{i-1}
    Q_{\adaptstate'}(x_0,\ud \tilde{x}_1^{(k)})
    \prod_{k=i+1}^N Q_\adaptstate(x_0,\ud \tilde{x}_1^{(k)})
    P_{\mathrm{CPF}}(\tilde{x}_{1}^{(1:N)}, \refr{x}_{2:T}, f).
    \qedhere
\end{align*}
\end{proof}

The following result is direct:
\begin{lemma}
\label{lemma:tv_bound} 
Let $Q_\Sigma$ stand for the random-walk Metropolis type kernel with
increment proposal distribution $q_\Sigma$, and with target function $M_1\ge 0$,
that is, a transition probability of the form:
\[
    Q_\Sigma(x,A) = \int_A q_\Sigma(\ud z) \min\bigg\{1,
      \frac{M_1(x+z)}{M_1(x)}\bigg\} + 1(x\in A)\bigg(1 - \int
    q_\Sigma(\ud z) \min\bigg\{1,
      \frac{M_1(x+z)}{M_1(x)}\bigg\}\bigg).
\]
Then, $\|Q_\Sigma(x,\uarg) - Q_{\Sigma'}(x,\uarg)\|_\mathrm{tv} \le 2 \|
q_\Sigma - q_{\Sigma'}\|_\mathrm{tv}$.
\end{lemma}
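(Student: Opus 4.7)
My plan is to decompose the Metropolis kernel into its accept and reject parts, observe that the acceptance ratio depends only on $M_1$ and $(x,z)$ (and \emph{not} on $\Sigma$), and then reduce the difference $Q_\Sigma - Q_{\Sigma'}$ to two integrals against the signed measure $q_\Sigma - q_{\Sigma'}$, each of which is controlled by $\|q_\Sigma - q_{\Sigma'}\|_\mathrm{tv}$ through the variational characterisation of total variation. The only algebraic ingredient is the fact that, for fixed $M_1$, two random-walk Metropolis kernels with different increment proposals share the \emph{same} acceptance function $\alpha(x,z)$, so when we subtract them the $\alpha$'s factor cleanly out of the calculation.

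Concretely, I would fix a measurable $f$ with $\|f\|_\infty \le 1$ and write, using the form of the kernel given in the lemma,
\[
   Q_\Sigma(x,f) = \int f(z)\,\alpha(x,z)\, q_\Sigma(\ud z) + f(x)\,r_\Sigma(x),
\]
where $\alpha(x,z) := \min\{1,\,M_1(x+z)/M_1(x)\} \in [0,1]$ and $r_\Sigma(x) = 1 - \int \alpha(x,z)\,q_\Sigma(\ud z)$. Subtracting the analogous expression for $Q_{\Sigma'}$, and using that the same $\alpha$ appears in both, I obtain
\[
   Q_\Sigma(x,f) - Q_{\Sigma'}(x,f)
   = \int f(z)\,\alpha(x,z)\,[q_\Sigma - q_{\Sigma'}](\ud z)
   \; - \; f(x)\int \alpha(x,z)\,[q_\Sigma - q_{\Sigma'}](\ud z),
\]
since $r_\Sigma(x) - r_{\Sigma'}(x) = -\int \alpha(x,z)\,[q_\Sigma - q_{\Sigma'}](\ud z)$. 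Both integrands, $z\mapsto f(z)\,\alpha(x,z)$ and $z\mapsto \alpha(x,z)$, have sup-norm at most $1$, so by the variational definition of total variation each integral is bounded in absolute value by $\|q_\Sigma - q_{\Sigma'}\|_\mathrm{tv}$. Combined with $|f(x)| \le 1$, this gives $|Q_\Sigma(x,f) - Q_{\Sigma'}(x,f)| \le 2\,\|q_\Sigma - q_{\Sigma'}\|_\mathrm{tv}$, and taking the supremum over $f$ yields the claim.

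I do not anticipate any real obstacle: the argument is essentially algebraic, hinging entirely on the $\Sigma$-independence of the acceptance ratio. The factor of $2$ in the bound is the precise cost of handling the accept and reject contributions separately; a tighter constant is unavailable through this decomposition, but is unnecessary for the uses of the lemma elsewhere in the paper (it feeds into the diminishing-adaptation verification via Lemma~\ref{lem:kernel-diff}, where only the order in $\|q_\Sigma - q_{\Sigma'}\|_\mathrm{tv}$ matters).
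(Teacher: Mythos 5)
Your argument is correct and is exactly the direct computation the paper has in mind: the paper states Lemma~\ref{lemma:tv_bound} with the remark that it is ``direct'' and gives no further proof, and your decomposition into accept/reject parts with the $\Sigma$-independent acceptance function $\alpha(x,z)\in[0,1]$, bounding each of the two resulting integrals against $q_\Sigma-q_{\Sigma'}$ by $\|q_\Sigma-q_{\Sigma'}\|_\mathrm{tv}$, is the standard way to fill in that omitted step. Nothing is missing, and the factor $2$ arises exactly as you describe.
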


The following result is from the proof of \cite[Proposition 26]{vihola-asm}:
\begin{lemma}
    \label{lem:proposal-continuity} 
Let $q_\Sigma(x,\ud y)$ stand for the centred Gaussian distribution
with covariance $\Sigma$, or the centred multivariate $t$-distribution
with shape $\Sigma$ and some constant degrees of freedom $\nu>0$.
Then, for any
$0<b_\ell<b_u<\infty$ there
exists a constant $c=c(b_\ell,b_u)<\infty$ such that
for all $\Sigma,\Sigma'$ with all eigenvalues within $[b_\ell,b_u]$,
\[
    \| q_\Sigma - q_{\Sigma'}\|_\mathrm{tv}
    \le c \| \Sigma - \Sigma' \|,
\]
where the latter stands for the Frobenius norm in $\mathbb{R}^d$.
\end{lemma}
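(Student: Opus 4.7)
The plan is to bound the total variation distance by the $L^1$ distance between densities, represent the density difference as a path integral along the convex combination $\Sigma_t = (1-t)\Sigma + t\Sigma'$, and show that the $L^1$ norm of $\nabla_\Sigma q_\Sigma$ is bounded uniformly over $\Sigma$ with eigenvalues in $[b_\ell,b_u]$.

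First I would use $\|q_\Sigma - q_{\Sigma'}\|_{\mathrm{tv}} = \tfrac{1}{2}\int |q_\Sigma(x) - q_{\Sigma'}(x)|\,\ud x$. The set of symmetric positive definite matrices with eigenvalues in $[b_\ell,b_u]$ is convex, so $\Sigma_t$ stays in this set for $t\in[0,1]$. By the fundamental theorem of calculus,
\[
    q_{\Sigma'}(x) - q_\Sigma(x) = \int_0^1 \bigl\langle \nabla_\Sigma q_{\Sigma_t}(x),\ \Sigma' - \Sigma \bigr\rangle\,\ud t,
\]
with $\langle\cdot,\cdot\rangle$ the Frobenius inner product. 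Cauchy--Schwarz in this inner product followed by integration over $x$ gives
\[
    \|q_\Sigma - q_{\Sigma'}\|_{\mathrm{tv}}
    \le \tfrac{1}{2}\,\|\Sigma-\Sigma'\|\int_0^1\!\!\int_{\R^d} \|\nabla_\Sigma q_{\Sigma_t}(x)\|\,\ud x\,\ud t,
\]
so it suffices to show $\sup_{\Sigma} \int \|\nabla_\Sigma q_\Sigma(x)\|\,\ud x < \infty$ over the allowed set of $\Sigma$.

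Matrix calculus yields, for the Gaussian, $\nabla_\Sigma \log q_\Sigma(x) = -\tfrac{1}{2}\Sigma^{-1} + \tfrac{1}{2}\Sigma^{-1}xx^T\Sigma^{-1}$, so $\nabla_\Sigma q_\Sigma(x) = q_\Sigma(x)\nabla_\Sigma \log q_\Sigma(x)$. Since $\|\Sigma^{-1}\|_F \le \sqrt{d}/b_\ell$ and the rank-one identity $\|\Sigma^{-1}xx^T\Sigma^{-1}\|_F = |\Sigma^{-1}x|^2 \le |x|^2/b_\ell^2$, one obtains $\|\nabla_\Sigma q_\Sigma(x)\| \le q_\Sigma(x)\bigl[\tfrac{\sqrt d}{2 b_\ell} + \tfrac{|x|^2}{2 b_\ell^2}\bigr]$. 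Integrating and using $\mathbb{E}_{q_\Sigma}[|X|^2] = \operatorname{tr}(\Sigma) \le d\,b_u$ gives a bound depending only on $b_\ell, b_u, d$.

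For the multivariate $t$, an analogous computation gives $\nabla_\Sigma \log q_\Sigma(x) = -\tfrac{1}{2}\Sigma^{-1} + \tfrac{\nu+d}{2\nu}\cdot\Sigma^{-1}xx^T\Sigma^{-1}/(1 + x^T\Sigma^{-1}x/\nu)$. The main obstacle here is that moments of $q_\Sigma$ need not exist for small $\nu$, so the Gaussian argument does not transfer. The trick is the elementary inequality $u/(1 + u/\nu) \le \nu$ for $u \ge 0$, applied with $u = x^T\Sigma^{-1}x$: combined with $|\Sigma^{-1}x|^2 \le b_u x^T\Sigma^{-1}x / b_\ell^2$, it yields $\|\Sigma^{-1}xx^T\Sigma^{-1}\|_F/(1 + x^T\Sigma^{-1}x/\nu) \le \nu b_u/b_\ell^2$, a bound \emph{independent of $x$}. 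Therefore $\|\nabla_\Sigma q_\Sigma(x)\| \le C\,q_\Sigma(x)$ with $C = C(b_\ell, b_u, \nu, d)$, and the integral is simply $C$. Plugging back into the display above gives $\|q_\Sigma - q_{\Sigma'}\|_{\mathrm{tv}} \le \tfrac{C}{2}\|\Sigma - \Sigma'\|$, as claimed.
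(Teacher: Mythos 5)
Your proof is correct. The paper does not actually prove this lemma --- it only cites the proof of Proposition 26 of \cite{vihola-asm} --- so your argument serves as a self-contained replacement, and it follows the natural route one would expect that reference to take: bound total variation by the $L^1$ distance of densities, write the difference as a line integral of $\langle \nabla_\Sigma q_{\Sigma_t}, \Sigma'-\Sigma\rangle$ along the segment $\Sigma_t=(1-t)\Sigma+t\Sigma'$ (legitimate because $\{b_\ell I \preceq \Sigma \preceq b_u I\}$ is convex), and check that $\int\|\nabla_\Sigma q_\Sigma(x)\|_F\,\ud x$ is bounded uniformly over that set. The two key computations are right: for the Gaussian the second-moment bound $\mathrm{tr}(\Sigma)\le d\,b_u$ suffices, and for the $t$-distribution you correctly identify that moments may fail for small $\nu$ and circumvent this with $u/(1+u/\nu)\le\nu$, which gives a pointwise bound $\|\nabla_\Sigma q_\Sigma(x)\|_F\le C\,q_\Sigma(x)$ with no moment condition at all. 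Two harmless remarks: the paper's convention $\|\xi\|_{\mathrm{tv}}=\sup_{\|f\|_\infty\le1}\xi(f)$ equals the full $L^1$ distance rather than half of it, which only rescales your constant $c$ by $2$; and your bound $|\Sigma^{-1}x|^2\le b_u\,x^{T}\Sigma^{-1}x/b_\ell^2$ can be sharpened to $x^{T}\Sigma^{-1}x/b_\ell$, though this changes nothing.
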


\begin{assumption}[Mixing]
    \label{a:mixing} 
The collection of Markov transition probabilities
$\{Q_\adaptstate\}_{\adaptstate\in\adaptspace}$
on $\X$ satisfies:
\begin{enumerate}[(i)]
    \item \label{item:minorisation} There exists $\epsilon>0$ such that
      for each $\adaptstate$, there is a probability measure
      $\nu_{\adaptstate}$ satisfying
      $Q_\adaptstate(x_0,A) \ge \epsilon \nu_{\adaptstate}(A)$ for all
      $x_0\in \mathsf{X}$, $\adaptstate\in\adaptspace$ and measurable $A\subset\mathsf{X}$.
    \item \label{item:bounded-weights} $\|G_k\|_\infty<\infty$ for all $k=1,\ldots,T$.
    \item \label{item:normalising-constant}
      $\inf_{\adaptstate\in\mathsf{Z}} \int \nu_{\adaptstate}(\ud x_0) Q_\adaptstate(x_0, \ud x_1) G_1(x_1) \prod_{k=2}^T
    M_k(x_{k-1}, \ud x_k) G_k(x_{k-1},x_k) \ud x_{1:T} > 0$.
\end{enumerate}
\end{assumption}

\begin{lemma}
    \label{lem:minorisation} 
Suppose that Assumption \ref{a:mixing} holds, then the kernels
$P_{\adaptstate}$ and $\tilde{P}_\adaptstate$ satisfy simultaneous
minorisation conditions, that is, there exists $\delta>0$ and probability
measures $\nu_\adaptstate,\tilde{\nu}_\adaptstate$, such that
\begin{equation}
     P_{\adaptstate}^k(x_{1:T},\uarg)
     \ge \delta \nu_\adaptstate(\uarg)
     \qquad\text{and}\qquad
      \tilde{P}_{\adaptstate}^k(\tilde{x}_{1:T},\uarg) \ge \delta
      \tilde{\nu}_\adaptstate(\uarg),
    \label{eq:uniform-ergodicity}
\end{equation}
for all $x_{1:T}\in\X$, $\tilde{x}_1^{(1:N)}\in\X^N$, and
$\adaptstate\in\adaptspace$.
\end{lemma}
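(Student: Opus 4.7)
The plan is to establish both inequalities directly with $k=1$, by combining the one-step minorisation of $Q_\adaptstate$ from Assumption~\ref{a:mixing}(\ref{item:minorisation}) with a uniform one-step minorisation for the forward CPF (Algorithm~\ref{alg:f-cpf}) implied by Assumption~\ref{a:mixing}(\ref{item:bounded-weights})--(\ref{item:normalising-constant}).

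The first step will be to strip the dependence on $\refr x_1$ coming from Lines~\ref{dcpf:gibbs}--\ref{dcpf:newinit} of Algorithm~\ref{alg:ai-cpf}. Applying $Q_\adaptstate(x,\uarg)\ge \epsilon\,\nu_\adaptstate(\uarg)$ together with the conditional independence of $X_0, \tilde X_1^{(2)}, \ldots, \tilde X_1^{(N)}$, the joint law of these $N$ variables dominates $\epsilon^{N}\,\nu_\adaptstate^{\otimes N}$ uniformly in $\refr x_1$. Substituting into the decomposition~\eqref{eq:ai-cpf-decomposition} from the proof of Lemma~\ref{lem:kernel-diff} yields
\[
P_\adaptstate(\refr x_{1:T}, \uarg) \;\ge\; \epsilon^{N}\!\int \nu_\adaptstate^{\otimes N}\bigl(\mathrm{d}x_0, \mathrm{d}\tilde x_1^{(2:N)}\bigr)\; P_{\mathrm{CPF}}\bigl((\refr x_1, \tilde x_1^{(2:N)}), \refr x_{2:T}; \uarg\bigr),
\]
and the analogous bound for $\tilde P_\adaptstate$ after replacing $P_{\mathrm{CPF}}$ by $\tilde P_{\mathrm{CPF}}$.

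The second step will be to invoke a standard uniform minorisation for the forward CPF, in the style of \cite{chopinsingh2015,lee-singh-vihola}: with uniformly bounded potentials (Assumption~\ref{a:mixing}(\ref{item:bounded-weights})) and i.i.d.\ non-reference initial particles drawn from a fixed distribution whose associated Feynman--Kac normalising constant is bounded away from zero, the F-CPF kernel admits a one-step minorisation $\ge \delta'\mu_\adaptstate(\uarg)$ uniformly in the reference trajectory. Assumption~\ref{a:mixing}(\ref{item:normalising-constant}) provides exactly the uniform-in-$\adaptstate$ lower bound on the normalising constant of the Feynman--Kac flow with initial measure $\int \nu_\adaptstate(\mathrm{d}x_0)\,Q_\adaptstate(x_0,\uarg)$. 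Combining with the first step gives $P_\adaptstate(\refr x_{1:T}, \uarg) \ge \epsilon^{N}\delta'\mu_\adaptstate(\uarg)$, and the parallel argument on the augmented path space, where the extra output $(B_1, V^{(1:N)}, \tilde X_1^{(1:N)})$ is a measurable function of the F-CPF randomness and the $\nu_\adaptstate$-draws, yields the corresponding bound for $\tilde P_\adaptstate$; one then sets $\delta = \epsilon^{N}\delta'$, $\nu_\adaptstate = \mu_\adaptstate$, and $\tilde\nu_\adaptstate = \tilde\mu_\adaptstate$.

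The main obstacle is the uniform F-CPF minorisation. Classical CPF uniform ergodicity results give reference-uniformity for a single, fixed Feynman--Kac model, but here both $\nu_\adaptstate$ and $Q_\adaptstate$ depend on the adaptation parameter $\adaptstate$, so $\delta'$ must be chosen simultaneously for every $\adaptstate\in\adaptspace$. Assumption~\ref{a:mixing}(\ref{item:normalising-constant}) is precisely the uniform-in-$\adaptstate$ ingredient that makes the Chopin--Singh style argument go through with a single $\delta'$. Beyond that, the remaining work is bookkeeping: $\refr x_{1:T}$ enters the right-hand side only through the reference slot of the F-CPF, and this slot becomes irrelevant after the integration against $\nu_\adaptstate^{\otimes N}$ is combined with the F-CPF minorisation, so that the resulting lower bound depends on neither $\refr x_{1:T}$ nor $\tilde x_1^{(1:N)}$.
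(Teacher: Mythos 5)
Your first step is a valid inequality, but it sets up the second step to fail. By minorising all $N$ draws at once you replace the law of the non-reference initial particles $\tilde X_1^{(2:N)}$ by i.i.d.\ draws from $\nu_\adaptstate$, so the embedded CPF in your lower bound is the CPF for the Feynman--Kac model whose \emph{initial law is $\nu_\adaptstate$ itself}. The normalising constant that must then be bounded away from zero is $\int \nu_\adaptstate(\ud x_1)G_1(x_1)\prod_{k=2}^T M_k(x_{k-1},\ud x_k)G_k(x_{k-1},x_k)$, whereas Assumption \ref{a:mixing} \ref{item:normalising-constant} controls the model with initial law $\int\nu_\adaptstate(\ud x_0)Q_\adaptstate(x_0,\uarg)$ --- one application of $Q_\adaptstate$ away; you even name the latter measure in your second step although it is not the law your first step produces. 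The assumptions do not give the bound you need: take $\nu_\adaptstate$ supported where $G_1\equiv 0$ and $Q_\adaptstate(x,\uarg)=\epsilon\,\nu_\adaptstate(\uarg)+(1-\epsilon)\mu(\uarg)$ with $\mu$ charging the region where $G_1>0$; then \ref{item:minorisation}--\ref{item:normalising-constant} hold, yet in your lower-bound kernel every non-reference particle gets weight zero at time $1$, so $B_1=1$ and the selected first state equals $\refr{x}_1$ with probability one, and no minorisation uniform in the reference can be extracted from that term (the $\nu_\adaptstate$-initialised smoothing distribution is not even defined). The lemma is still true in this example, but your chain of reasoning does not prove it.

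The paper avoids this by spending Assumption \ref{a:mixing} \ref{item:minorisation} only on the single draw $X_0\sim Q_\adaptstate(\refr{x}_1,\uarg)$ and keeping $\tilde X_1^{(2:N)}$ conditionally i.i.d.\ from $Q_\adaptstate(x_0,\uarg)$: for each fixed $x_0$ the bracketed term in \eqref{eq:ai-cpf-decomposition} is a standard CPF for the Feynman--Kac model with initial law $Q_\adaptstate(x_0,\uarg)$, which is minorised by its own target $\pi^*_{\adaptstate,x_0}$ with a constant $\hat\epsilon$ uniform in $x_0$ and $\adaptstate$ \cite[Corollary 12]{andrieu-lee-vihola}, backward sampling being a further Gibbs step and $\tilde P_\adaptstate$ an augmentation of $P_\adaptstate$; the minorising measure is then the mixture $\int\nu_\adaptstate(\ud x_0)\pi^*_{\adaptstate,x_0}(\uarg)$ and the constant is $\epsilon\hat\epsilon$ rather than your $\epsilon^N\delta'$. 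To repair your argument, either revert to this single-$\epsilon$ replacement (keeping the $x_0$-conditional structure so that Assumption \ref{a:mixing} \ref{item:normalising-constant} is the relevant normalising-constant condition), or strengthen the assumption to cover the $\nu_\adaptstate$-initialised model; as written, the appeal to Assumption \ref{a:mixing} \ref{item:normalising-constant} is applied to the wrong Feynman--Kac model.
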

\begin{proof} 
For $\hat{P}_\adaptstate \in \{P_\adaptstate, \tilde{P}_\adaptstate\}$,
we may write as in the proof of Lemma \ref{lem:kernel-diff}
\[
    \hat{P}_{\adaptstate}(x_{1:T},\uarg) = \int Q_\adaptstate(x_1, \ud x_0)
\hat{P}^*_{\text{CPF},\adaptstate,x_0}(x_{1:T},\uarg),
\]
where the latter term refers to the term in brackets in
\eqref{eq:ai-cpf-decomposition} --- the transition
probability of a conditional particle filter, with reference
$x_{1:T}$, and the
Feynman-Kac model $\check{M}_1^{(\adaptstate,x_0)}(\ud x_1) =
Q_\adaptstate(x_0, \ud x_1)$,
$M_{2:T}$ and $G_{1:T}$, whose normalised probability we call
$\pi^*_{\adaptstate,x_0}$.
Assumption \ref{a:mixing}
\ref{item:bounded-weights} and \ref{item:normalising-constant}
guarantee that $P^*_{\text{CPF},\adaptstate,x_0}(x_{1:T},\ud x'_{1:T}) \ge
\varepsilon \pi^*_{\adaptstate,x_0}(\ud x'_{1:T})$,
where $\hat{\epsilon}>0$ is independent of $x_0$ and $\adaptstate$
\cite[Corollary 12]{andrieu-lee-vihola}. Note that the same conclusion
holds also with backward sampling, because it is only a further Gibbs step to the
standard CPF. Likewise, in case of $\tilde{P}_\adaptstate$,
the result holds because we may regard $\tilde{P}_\adaptstate$ as an
augmented version of $P_\adaptstate$ \citep[e.g.][]{franks-vihola}.
We conclude that
\[
    \hat{P}_{\adaptstate}(x_{1:T},\uarg)
    \ge \epsilon \hat{\epsilon} \int \nu_\adaptstate(\ud x_0)
     \pi^*_{\adaptstate,x_0}(\uarg),
\]
where the integral defines a probability measure independent of
$x_{1:T}$.
\end{proof}

We may write the $k$:th step of Algorithm \ref{alg:aai-cpf} as:
\begin{enumerate}[(i)]
\item $(X_k,\adaptdata_k) \sim
  \tilde{P}_{\adaptstate_{k-1}}(X_{k-1},\uarg)$,
\item $\adaptstate_k^* = \adaptstate_{k-1} + \eta_k
  H(\adaptstate_{k-1}, X_k, \adaptdata_k )$,
\end{enumerate}
where $H$ correspond to Algorithm
\ref{alg:adapt-fdi-am} or \ref{alg:adapt-fdi-aswam}, respectively.
The stability may be enforced by introducing the following optional
step:
\begin{enumerate}[(i)]
    \stepcounter{enumi}
    \stepcounter{enumi}
\item $\adaptstate_k = \adaptstate_k^* 1(\adaptstate_k\in \adaptspace) +
  \adaptstate_{k-1} 1(\adaptstate_k^*\notin \adaptspace)$,
\end{enumerate}
which ensures that $\adaptstate\in \adaptspace$, the
feasible set for adaptation.

\begin{proof}[Proof of Theorem \ref{thm:slln}] 
The result follows by \cite[Theorem 2]{saksman-vihola},
as (A1) is direct,
Lemma \ref{lem:minorisation} implies (A2)
with $V\equiv 1$, $\lambda_n=0$, $b_n=1$, $\delta_n=\delta$
and $\epsilon=0$, Lemmas \ref{lemma:tv_bound} and
Lemma \ref{lem:proposal-continuity} imply (A3), and (A4) holds trivially,
as $\| H (\uarg)\|_\infty < \infty$, thanks to the compactness of $D$.
\end{proof}


\section{Details of the DPG-BS algorithm} \label{app:dpg_details} 
The diffuse particle Gibbs algorithm targets \eqref{eq:feynman_kac} by alternating
the sampling of $x_{2:T}$ given $x_1$, and $x_1$ given $x_{2:T}$.
Hence, the algorithm is simply particle Gibbs where the initial state is treated
as a parameter.
Define
$$p^{\mathrm{DPG}}(x_1  \mid x_{2:T}) \propto M_1(x_1)G_1(x_1)M_2(x_2 \mid x_1)G_2(x_1, x_2).$$
With this definition, the DPG-BS algorithm can be written as in Algorithm \ref{alg:dpg}.
Lines \ref{line:dpg-cpf-latent-start}--\ref{line:dpg-cpf-latent-end} constitute a CPF-BS update for
$x_{2:T}$, and line \ref{line:dpg-cpf-x1-update} updates $x_1$.
A version of the RAM algorithm \cite{ram} (Algorithm \ref{alg:ram}) is used for adapting
the normal proposal used in sampling $x_1$ from $p^{\mathrm{DPG}}$.

\begin{algorithm}
  \caption{$\text{DPG-BS}(X_1^{(0)}, \refr{x}_{2:T}^{(0)}; \pi, N$)}
  \label{alg:dpg}
  \begin{algorithmic}[1]
    \State Set $S_0 = I, \alphatarget = 0.441, \eta^{\mathrm{max}} = 0.5, \gamma = 0.66.$
    \For{$j$ in $1, \dots, n$}
      \State Simulate $\tilde{X}_2^{(2:N)} \sim M_2(\uarg \mid X_1^{(j-1)})$ and set $\tilde{X}_2^{(1)} = \refr{x}_2^{(0)}$. \label{line:dpg-cpf-latent-start}
      \State $(\tilde{X}_{2:T}^{(1:N)},W_{2:T}^{(1:N)},A_{2:T-1}^{(1:N)}) \gets \text{F-CPF}(\refr{x}_{3:T},
      \tilde{X}_2^{(1:N)}; M_{3:T}, G_{2:T}, N) $.

      \State $(B_{2:T}, \adaptdata) \gets
      \textsc{PickPath-BS}(\tilde{X}_{2:T}^{(1:N)},W_{2:T}^{(1:N)},
      A_{2:T-1}^{(1:N)}, M_{3:T}, G_{3:T})$ \label{line:dpg-cpf-latent-end}

      \State Simulate $(X_1^{(j)}, S_j) \gets \mathrm{RAM}(p^{\mathrm{DPG}}(\uarg \mid \tilde{X}_{2:T}^{(B_{2:T})}), X_1^{(j-1)}, S_{j-1}, \alphatarget, \eta^{\mathrm{max}}, \gamma).$ \label{line:dpg-cpf-x1-update}

      \State Set $\mathbf{X}^{(j)} = (X_1^{(j)}, \tilde{X}_{2}^{(B_2)}, \tilde{X}_{3}^{(B_3)},
                  \ldots, \tilde{X}_{T}^{(B_T)}).$
    \EndFor
    \State \textbf{output} $\mathbf{X}^{(1:n)}$
  \end{algorithmic}
\end{algorithm}

\begin{algorithm}
  \caption{$\text{RAM}(p, \theta^{(n-1)}, S_{n-1}, \alpha_{*}, \eta^{\mathrm{max}}, \gamma)$ (iteration $n$)}
  \label{alg:ram}
  \begin{algorithmic}[1]
    \State Simulate $U_n \sim N(0, I_d)$.
    \State Propose $\theta^{*} = \theta^{(n-1)} + S_{n-1}U_n$.
    \State Compute $\alpha_n = \min\left\{ 1, \dfrac{p(\theta^*)}{p(\theta^{(n-1)})} \right\}$.
    \State With probability $\alpha_n$, set $\theta^{(n)} = \theta^{*}$; otherwise
           set $\theta^{(n)} = \theta^{(n-1)}$.
    \State Set $\eta_n = \min\{\eta^{\mathrm{max}}, dn^{-\gamma} \}.$
    \State Compute $S_n$ such that $S_nS_n^{'} = S_{n-1}\left(I + \eta_n(\alpha_n - \alphatarget)
           \dfrac{U_nU_n^{'}}{\lVert U_n \rVert^2}\right)S_{n-1}^{'}.$
    \State \textbf{output} $\theta^{(n)}$, $S_n$.
  \end{algorithmic}
\end{algorithm}


\section{Supplementary Figures} 

\begin{figure}
  \centering
  \includegraphics{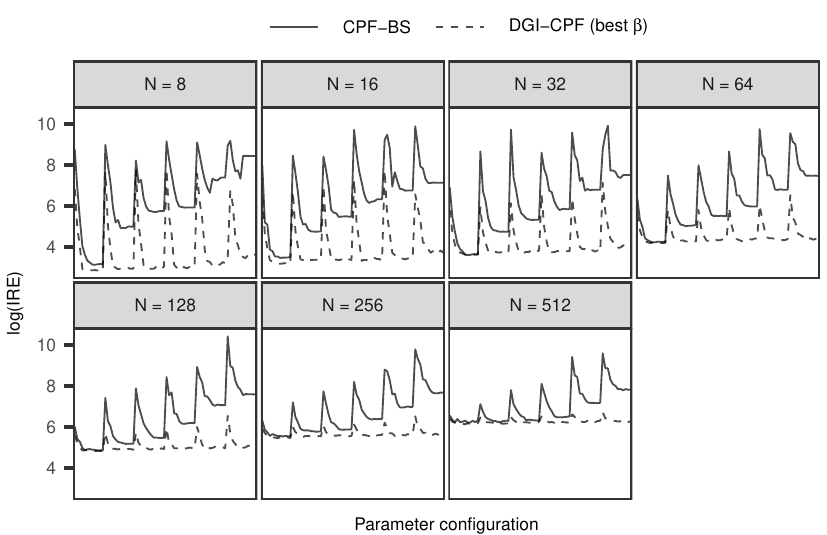}
  \caption{The $\log{(\ire)}$ resulting from the application of the CPF-BS and
  the best case DGI-CPF to the SV model.
  The horizontal axis depicts different configurations of
  $\sigma_1$ and $\sigma_x$, and in each panel $N$ varies.}
  \label{fig:dgi-cpf-opt-vs-cpf-bs-SV-logiactn}
\end{figure}

\begin{figure}
  \centering
  \includegraphics{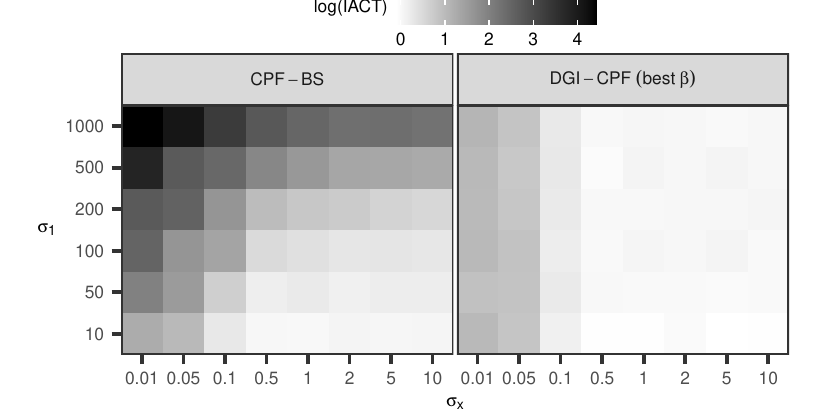}
  \caption{The $\log{(\iact)}$ of the CPF-BS (left) and the best case
  DGI-CPF (right) with respect to $\sigma_1$ and $\sigma_x$ in the case of the
  RW model and $N = 256$.}
  \label{fig:cpf-bs-dgi-cpf-NOISYAR-log-iact-heatmap-npar256}
\end{figure}

\begin{figure}
  \centering
  \includegraphics{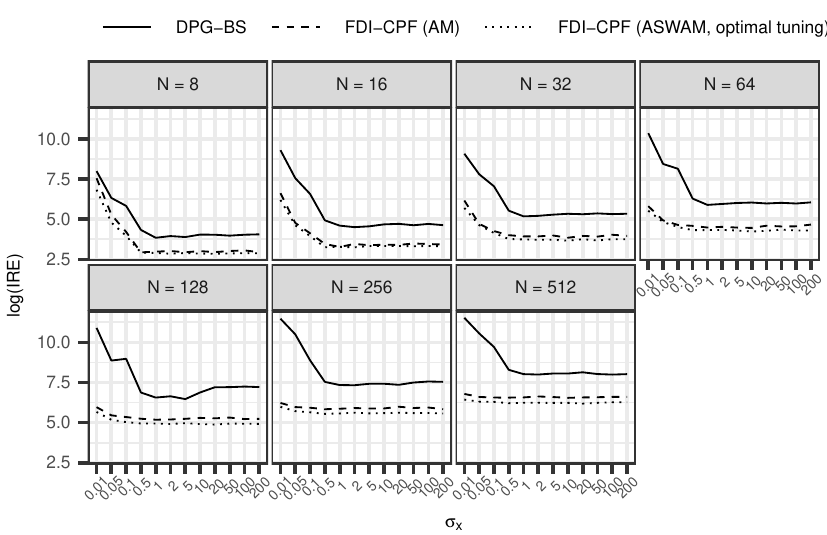}
  \caption{
  The $\log{(\ire)}$ for the DPG-BS, the FDI-CPF with the AM adaptation and the best case
  FDI-CPF with the ASWAM adaptation to the datasets generated with varying $\sigma_x$ from
  the SV model.}
  \label{fig:SV-fdi-cpf-vs-dpg-cpf-logiactn}
\end{figure}

\begin{figure}
  \centering
  \includegraphics{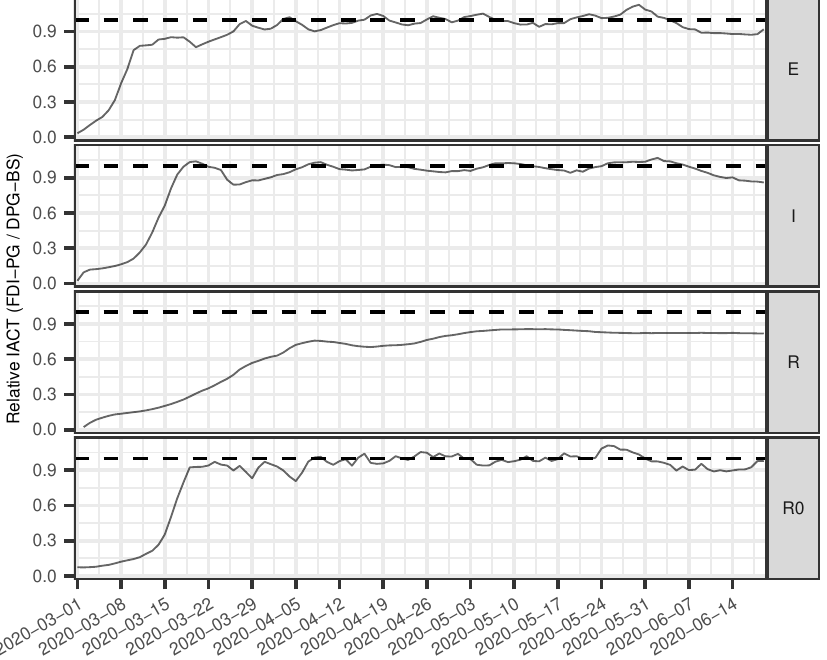}
  \caption{The integrated autocorrelation time with the FDI-PG relative to that of the DPG-BS
           for the state variables at each time point in the SEIR model. The dashed line shows the
           line of equal sampling efficiency. The first value for the state variable
           $R$ is missing, since $R_1 = 0$ is assumed in the model.}
  \label{fig:seir-rel-iact-fdi-dpg}
\end{figure}

\begin{figure}
  \centering
  \includegraphics{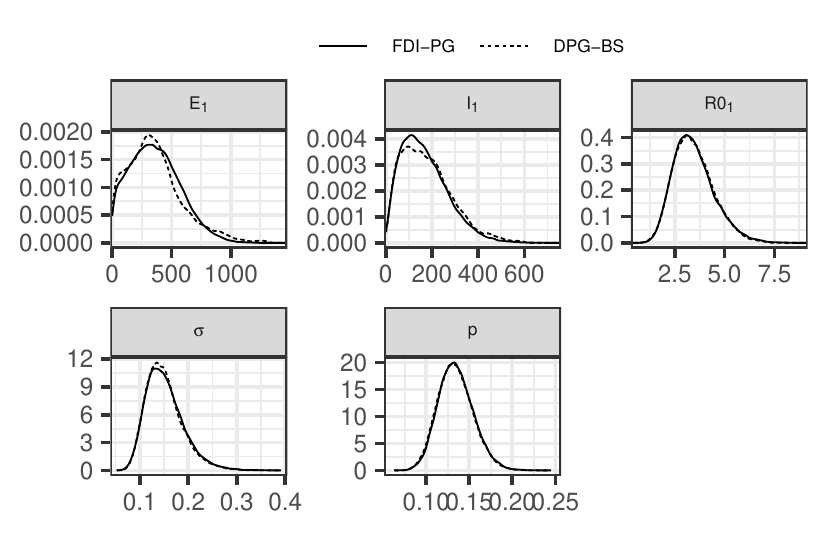}
  \caption{Marginal density estimates of the initial states and
  parameters for the SEIR model computed by the FDI-PG and DPG-BS.}
  \label{fig:seir-dens-per-method-and-param}
\end{figure}

\end{document}